\definecolor{myblue}{rgb}{0.69, 0.806, 1}
\newcommand{\E}{{\mathbb E}}
\newcommand{\F}{{\mathbb F}}
\newcommand{\R}{{\mathbb R}}
\newcommand{\Acal}{{\mathcal A}}
\newcommand{\vertiii}[1]{{\left\vert\kern-0.25ex\left\vert\kern-0.25ex\left\vert #1 
    \right\vert\kern-0.25ex\right\vert\kern-0.25ex\right\vert}}
\DeclareMathOperator{\argmin}{argmin}
\definecolor{darkgreen}{rgb}{0,0.7,0}
\newcommand{\iii}{{\vert\kern-0.25ex\vert\kern-0.25ex\vert}}
\definecolor{blue0}{RGB}{0,77,153} 
\definecolor{red0}{RGB}{179,0,77} 
\definecolor{green0}{RGB}{134,219,76} 
\definecolor{gray0}{RGB}{84,97,110}
\newtheorem{theorem}{Theorem}
\newtheorem{lemma}[theorem]{Lemma}
\newtheorem{remark}[theorem]{Remark}
\theoremstyle{definition}
\numberwithin{equation}{section}
\numberwithin{theorem}{section}
\title{Mean-variance portfolio selection with tracking error penalization}
\author{William Lefebvre\thanks{BNP Paribas Global Markets, Universit\'e de Paris and Sorbonne Universit\'e, Laboratoire de Probabilit\'es, Statistique et Mod\'elisation (LPSM, UMR CNRS 8001), Building Sophie Germain, Avenue de France, 75013 Paris, \texttt{wlefebvre at lpsm.paris}} \quad Gr\'egoire Loeper\thanks{BNP Paribas Global Markets, School of Mathematics, Monash University, Clayton Campus, VIC, 3800, Australia, \texttt{gregoire.loeper at monash.edu}} \quad Huy\^en Pham \thanks{Universit\'e de Paris and Sorbonne Universit\'e, Laboratoire de Probabilit\'es, Statistique et Mod\'elisation (LPSM, UMR CNRS 8001), Building Sophie Germain, Avenue de France, 75013 Paris, \texttt{pham at lpsm.paris}}
}
\begin{document}
\maketitle
\begin{abstract}
    This paper studies a variation of the continuous-time mean-variance portfolio selection where a tracking-error penalization is added to the mean-variance criterion. The tracking error term penalizes the distance between the allocation controls and a refe\-rence portfolio with same wealth and fixed weights. Such consideration is motivated as fo\-llows: (i) On the one hand, it is a way to  robustify the mean-variance allocation in case of misspecified parameters, by ``fitting" it to a reference portfolio that can be agnostic to market parameters; (ii) On the other hand,  it is a procedure to track a benchmark and improve the Sharpe ratio of the resulting portfolio by considering a mean-variance criterion in the objective function. This problem is formulated as a McKean-Vlasov control problem. We provide explicit solutions for the optimal portfolio strategy and asymptotic expansions of the portfolio strategy and efficient frontier for small values of the tracking error parameter. Finally, we compare the Sharpe ratios obtained by the standard mean-variance allocation and the penalized one for four different reference portfolios: equal-weights, minimum-variance, equal risk contributions and shrinking portfolio.  
    This comparison is done on a simulated misspecified model,  
and on a backtest performed with historical data. Our results show that in most cases, the penalized portfolio outperforms in terms of Sharpe ratio both the standard mean-variance and the reference portfolio. 
\end{abstract}
{\bf Keywords:} Continuous-time mean-variance problem, tracking error, robustified allocation, parameter misspecification.


\newpage

\section{Introduction}

The Markowitz mean-variance portfolio selection problem has been initially considered in \cite{markowitz1952portfolio} in a single-period model. In this framework, investement decision rules are made according to the objective of maximizing the expected return of the portfolio for a given financial risk quantified by its variance. The Markowitz portfolio is widely used in the financial industry due to its intuitive formulation and the fact that it produces, by construction, portfolios with high Sharpe ratios (defined as the ratio of the average of portfolio returns over their volatility), which is a key metric used to compare investment strategies. 

The mean-variance criterion involves the expected terminal wealth in a nonlinear way due to the presence of the variance term. In a continuous-time dynamic setting, this induces the so-called time inconsistency problem and prevents the direct use of the dynamic programming technique. A first approach, from \cite{zhou2000continuous}, consists in embedding the mean-variance problem into an auxiliary standard control problem that can be solved by using stochastic linear-quadratic theory. Some more recent approaches rely on the development of stochastic control techniques for control problems of McKean-Vlasov (MKV) type. MKV control problems are problems in which the equation of the state process and the cost function involve the law of this process and/or the law of the control, possibly in  a non-linear way. The mean-variance portfolio problem in continuous-time is a McKean-Vlasov control problem of the linear-quadratic type. The state diffusion, which represents the wealth of the portfolio, involves the state process and the control in a linear way while the cost involves the terminal value of the state and the square of its expectation due to the variance criterion. In \cite{andersson2011maximum}, the authors solved the mean-variance problem as a McKean-Vlasov control problem  by deriving a version of the Pontryagin maximum principle. More recently, \cite{pham2017dynamic} have developed a general dynamic programming approach for the control of MKV dynamics and applied it for the resolution of the mean-variance portfolio selection problem. In \cite{fischer2016continuous}, the mean-variance problem is viewed as the MKV limit of a family of controlled many-component weakly interacting systems. These prelimit problems are solved by standard dynamic programming, and the solution to the original problem is obtained by passage to the limit. 

A frequent criticism addressed to the mean-variance allocation 
is its sensitivity to the estimation of expected returns and covariance of the stocks and the risk of a poor out-of-sample performance. Several solutions to these issues have been considered. An approach consists in using a more sophisticated model than the Black-Scholes model, in which the parameters are stochastic or ambiguous and to take decisions under the worst-case scenario over all conceivable models.  
Robust mean-variance problems have thus  been considered in the economic and engineering literature, mostly on single-period or multi- period models; see, e.g., \cite{fabozzi2010robust}, \cite{pinar2016robust}, and \cite{liu2016correlation}. In a continuous-time setting,  \cite{ismail2019robust} have developed a robust approach by studying the mean-variance allocation with a market model where the model uncertainty affects the covariance matrix of multiple risky assets. In \cite{guo2020robust}, the authors study the problem of utility maximization under uncertain parameters in a model where the parameters of the model do not evolve freely within a given range, but are constrained via a penalty function. Let us also mention uncertain volatility models in \cite{matoussi2012robust} and \cite{lin2014optimal} for robust portfolio optimization with expected utility criterion.
Another approach is to rely on the shrinking of the portfolio weights or of the wealth invested in each risky asset in order to obtain a more sparse or more stable portfolio. In \cite{demiguel2009generalized}, the authors find single-period portfolios that perform well out-of-sample in the presence of estimation error. Their framework deals with the resolution of the traditional minimum-variance problem with the additional constraint that the norm of the portfolio-weight vector must be smaller than a given threshold. In \cite{ho2015weighted}, the authors study a one-period mean-variance problem in which the mean-variance objective function is regularized with a weighted elastic net penalty. They show that the use of this penalty can
be justified by a robust reformulation of the mean-variance criterion that directly accounts for parameter uncertainty. In the same spirit, in \cite{chen2013sparse}, $l_p$-norm regularized models are used to seek near-optimal sparse portfolios.

In this paper, we investigate the mean-variance portfolio selection in continuous time with a tracking error penalization. This penalization represents the distance  between the optimized portfolio composition and the composition of a reference portfolio with the same wealth but fixed weights that have been chosen in advance. Typical reference portfolios widely used in the financial industry are the equal weights, the minimum variance and the equal risk contribution (ERC) portfolios. The equal weights portfolio studied, e.g. in \cite{duchin2009markowitz}, is a portfolio where all the wealth of the investor is invested in risky assets and divided equally between the different assets. The minimum variance portfolio is a portfolio where all the wealth is invested in risky assets and portfolio weights are optimized in order to attain the minimal portfolio volatility. The ERC portfolio, presented in \cite{maillard2010properties} and in the monography \cite{roncalli2013introduction}, is totally invested in risky assets and optimized such that  the contributions of each asset to the total volatility of the portfolio are equal.  The mix of the mean-variance and of this tracking error criterion can be interpreted  in two different ways: (i) 
From a first viewpoint, it is a procedure  to regularize and robustify the mean-variance allocation. By choosing reference portfolio weights which are not based on the estimation of market parameters, or which are less sensible to estimation error, the allocation obtained is more robust to parameters estimation error than the standard mean-variance one. (ii) From a second viewpoint, this optimization permits  to mimic an allocation corresponding to the reference portfolio weights while improving  
its Sharpe ratio via the consideration  of the mean-variance criterion.

We tackle this problem as a McKean-Vlasov linear-quadratic control problem and adopt the approach developed in \cite{basei2019weak}, where the authors give a general method to solve this type of problems by means of a weak martingale optimality principle. We obtain explicit solutions for the optimal portfolio strategy and value function, and provide asymptotic expansions of the portfolio strategy and efficient frontier for small values of the portfolio tracking error penalization parameter. 
We then compare the Sharpe ratios obtained by the standard  mean-variance portfolio, the penalized one and the reference portfolio in two different ways. First, we compare these performances on simulated market data with misspecified market parameters. Different magnitudes of parameter misspecifications are used to illustrate the impact of the parameter estimation error on the performance of the different portfolios. In a second time, we compare the performances of these portfolios on a backtest based on historical market data. In these tests, we shall consider three reference portfolios cited above: the equal weights, the minimum variance and the equal risk contribution (ERC) portfolios. 
Finally, we will also consider the case where the reference portfolio weights are all equal to zero. This case corresponds to a shrinking of the wealth invested in the different risky assets along the investment horizon.

The rest of the paper is organized as follows. Section 2 formulates the mean-variance problem with tracking error. In Section 3 we derive explicit solutions for this control problem and provide expansion of this solution for small values of the tracking error penalization parameter. Section 4 is devoted to the applications of those results and to the comparison of the mean-variance, penalized and reference portfolio for the different reference portfolios presented above. We show the benefit of the penalized portfolio compared to the standard  mean-variance portfolio and the different reference portfolios on simulated and historical data in terms of Sharpe ratio and the lower sensitivity of the 
penalized portfolio to parameter estimation error. 

\section{Formulation of the problem}

Throughout this paper, we fix a finite horizon $T$ $\in$ $(0,\infty)$, and a complete probability space $\big( \Omega, \mathcal{F}, \mathbb{P}, 
\F= \left\{ \mathcal{F}_t \right\}_{0\leq t \leq T} \big)$ on which a standard $\F$-adapted $d$-dimensional Brownian motion $W$ $=$ $(W^1,...,W^d)$ is defined. We denote by $L_\F^2 (0,T;\mathbb{R}^d)$ the set of all $\R^d$-valued, measurable stochastic processes $(f_t)_{t\in [0,T]}$ adapted to $\F$ such that $\E\big[ \int_0^T |f_t|^2 dt \big] < \infty$.
We consider a financial market with price process  $P:=(P_t)_{t\in[0,T]}$, composed of one risk-free asset, assumed to be constant equal to one, i.e., $P^0$ $\equiv$ $1$, 
and $d$ risky assets on a finite investment horizon $[0,T]$. These assets price processes $P^i_t,\ i=1,...,d$ satisfy the following stochastic differential equation:


\begin{equation}
\left\{ \begin{array}{ll}
dP^i_t \; = \; P^i_t\left(b_i\ dt+\sum_{j=1}^{n}\sigma_{ij}dW_t^{j}\right),\quad t\in[0,T]\\
P^i_0 \; > \; 0
\end{array}\right.
\end{equation}
where $b_{i}>0$ is the appreciation rate, and $\sigma:=(\sigma_{ij})_{i,j=1,...,d}\in \R^{d\times d}$ is the volatility matrix  of the $d$ stocks. 
We denote by $\Sigma := \sigma \sigma^\top$ the covariance matrix. 
Throughout this paper, we will assume that the following nondegeneracy condition holds
\begin{equation}
\Sigma \; \geq \; 
\delta \mathbb{I}_d, 
\end{equation}
for some $\delta>0$, where $\mathbb{I}_d$ is the $d\times d$ identity matrix.

Let us consider an investor with total wealth at time $t\geq 0$ denoted by $X_t$, starting from some initial capital $x_0$ $>$ $0$.  
It is assumed that the trading of shares takes place continuously and transaction cost and consumptions are not considered. 
We define the set of admissible portfolio strategies $\alpha$  $=$ $(\alpha^1,\ldots,\alpha^d)$ as
\begin{equation}
\mathcal{A}:=\left\{ \alpha:\Omega\times[0,T]\rightarrow\mathbb{R}^{d}\ \mathrm{s.t}\ \alpha\ \mathrm{is}\ \F-\mathrm{adapted\ and}\ \int_{0}^{T}\mathbb{E}[|\alpha_{t}|^{2}]dt<\infty\right\},  
\end{equation}
where  $\alpha_t^i,\ i=1,...,d$ represents the total market value of the investor's  wealth invested in the $i$th asset at time $t$. The dynamics of the self-financed wealth process $X$ $=$ $X^\alpha$ associated to a 
portfolio strategy $\alpha$ $\in$ $\mathcal{A}$   is then  driven by
\begin{equation}
\label{eq:state_dynamic}
\begin{array}{ll}
dX_{t} \; = \;  \alpha_{t}^{\top} b\ dt+\alpha_{t}^{\top}\sigma dW_{t}. 
\end{array}
\end{equation}

Given a risk aversion parameter $\mu>0$, and a reference weight $w_r$ $\in$ $\R^d$, 
the objective of the investor is to minimize over admissible portfolio strategies  a mean-variance functional to which is added a running cost:
\begin{equation}
\label{eq:cost}
J(\alpha)=\mu \textrm{Var}(X_T) - \E[X_T] + \E\Big[\int_{0}^{T}\left(\alpha_{t}-w_rX_{t}\right)^{\top}\Gamma\left(\alpha_{t}-w_rX_{t}\right)dt \Big]. 
\end{equation}
This running cost represents a running \textit{tracking error} between the portfolio composition $\alpha_t$ of the investor and the reference composition $w_r X_t$ of a portfolio of same wealth $X_t$ and constant weights $w_r$.
The matrix $\Gamma\in \R^{d\times d}$ is  symmetric positive definite and is used to introduce an anisotropy in the portfolio composition penalization.
The penalization $\int_{0}^{T}\left(\alpha_{t}-w_rX_{t}\right)^{\top}\Gamma\left(\alpha_{t}-w_rX_{t}\right)$, which we will call  ``tracking error penalization",
is introduced in order to ensure that the portfolio of the investor does not move away
too much from this reference portfolio with respect to  the distance $|M| :=M^\top \Gamma M,\ M\in\R^d $.

The mean-variance portfolio selection with tracking error is then formulated as 
\begin{equation}
\label{control_problem}
        V_0 := \underset{\alpha \in \mathcal{A}}{\inf} J(\alpha),  
\end{equation}
and an optimal allocation given the cost $J(\alpha)$ will be given by 

\begin{equation}
    \alpha_t^*  \; \in \;  \underset{\alpha\in \mathcal{A}}{\arg\min}\  J(\alpha). 
\end{equation}

We complete this section by recalling the solution to the mean-variance problem when there is no tracking error running cost, and which will serve later as benchmark for comparison when studying
the effect of the tracking error with several reference portfolios.

\begin{remark}[\textbf{Case of no tracking error}]
When $\Gamma =0$, it is known, see e.g. \cite{zhou2000continuous} that the optimal mean-variance strategy is given by 
\begin{equation}
\label{eq:optimal_control_classical}
    \alpha_t^* = \Sigma^{-1} b \left[ \frac{1}{2\mu}e^{b^\top \Sigma^{-1}b\ T} + x_0 - X_t^* \right],\quad 0\leq t\leq T,
\end{equation}
where $X_t^*$ is the wealth process associated to $\alpha^*$. The vector $\Sigma^{-1} b$, which depends only on the model parameters of the risky assets, determines the allocation in the risky assets.
\end{remark}

In the sequel, we study the quantitative impact of the tracking error running cost on the optimal mean-variance strategy.

\section{Solution allocation with tracking error}
\label{section:mv_tracking}
 

Our main theoretical result provides an analytic characterization of the optimal control to the mean-variance problem with tracking error. 

\begin{theorem}
\label{theorem:mv_tracking}
There exist a unique pair $\left( K, \Lambda \right)\in C\left([0,T], \R_{+}^{*}\right) \times C\left([0,T], \R_{+}\right)$ solution to the system of ODEs
\begin{equation}
\label{eq:ode_system}
    \begin{cases}
dK_{t}=\left\{ \left(K_t b -\Gamma w_r \right)^\top S_t^{-1}\left(K_t b -\Gamma w_r \right)-w_r^{\top}\Gamma w_r \right\} dt, & K_{T}=\mu\\
\\
d\Lambda_{t}=\left\{ \left(\Lambda_t b -\Gamma w_r \right)^\top S_t^{-1}\left(\Lambda_t b -\Gamma w_r \right)-w_r^{\top}\Gamma w_r \right\} dt, & \Lambda_{T}=0\\
    \end{cases}
\end{equation}
where $S_t := K_t \Sigma + \Gamma$. The optimal control for problem  \eqref{control_problem} is then given by
\begin{equation}
\label{eq:optimal_control}
\alpha_{t}^{\Gamma}= S_t^{-1} \Gamma w_r X_t -S_{t}^{-1}b\big[K_{t}X_t +Y_{t}-(K_t - \Lambda_{t}) \E[X_{t}] \big],
\end{equation}
with 
\begin{align}
    \label{eq:closed_forms}
    &Y_t=-\frac{1}{2}e^{-\int_t^T b^\top S_s^{-1} (\Lambda_s b - \Gamma w_r)ds}\\
    &R_t=\frac{1}{2}\int_t^T b^\top S_s^{-1}b ~ e^{-2\int_s^T b^\top S_u^{-1} (\Lambda_u b - \Gamma w_r)du} ~ ds,
\end{align} 
and $X = X^{\alpha^\Gamma}$ is the wealth process associated to $\alpha^\Gamma$.  Moreover, we have 
\begin{equation}
    V_0 \; = \;  J(\alpha^\Gamma) \; = \; \Lambda_{0}X_{0}^{2}+2Y_{0}X_{0}+R_{0}. 
\end{equation}
\end{theorem}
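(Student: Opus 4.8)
The plan is to prove Theorem~\ref{theorem:mv_tracking} by a \emph{weak martingale optimality principle} in the spirit of \cite{basei2019weak}, which is designed for linear-quadratic McKean--Vlasov problems and sidesteps the time-inconsistency caused by the variance term. The skeleton has two independent parts: well-posedness (with sign) of the ODE system, and a verification argument that simultaneously identifies $\alpha^\Gamma$ and the value $V_0$.

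\textbf{Well-posedness of the ODE system.} The equation for $K$ does not involve $\Lambda$, and its right-hand side $F(K)=(Kb-\Gamma w_r)^\top(K\Sigma+\Gamma)^{-1}(Kb-\Gamma w_r)-w_r^\top\Gamma w_r$ is smooth, hence locally Lipschitz, wherever $K\Sigma+\Gamma$ is invertible; since $\Gamma\succ0$ this holds for all $K\ge0$, so Cauchy--Lipschitz gives a local solution from $K_T=\mu$. Because the quadratic form is nonnegative, $F(K)\ge-w_r^\top\Gamma w_r$, so integrated backward $K$ grows at most linearly and cannot blow up, hence is bounded on $[0,T]$. For positivity I would note that $K\equiv0$ also solves the $K$-ODE (the reference term cancels: $(\Gamma w_r)^\top\Gamma^{-1}(\Gamma w_r)=w_r^\top\Gamma w_r$); so if $K$ vanished at some time, uniqueness would force $K\equiv0$, contradicting $K_T=\mu>0$, and by continuity $K_t>0$ on $[0,T]$, so $S_t=K_t\Sigma+\Gamma$ is uniformly positive definite. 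Feeding this bounded, uniformly positive-definite $S$ into the $\Lambda$-equation, the same estimates give a global solution from $\Lambda_T=0$, and $\Lambda_t\ge0$ because at any time with $\Lambda_t=0$ one gets $\tfrac{d}{dt}\Lambda_t=w_r^\top\Gamma S_t^{-1}\Gamma w_r-w_r^\top\Gamma w_r\le0$ (using $S_t\succeq\Gamma$, hence $S_t^{-1}\preceq\Gamma^{-1}$), so $\Lambda$ cannot cross into negative values when integrated backward from $0$. Finally $Y$ and $R$ in \eqref{eq:closed_forms} are explicit quadratures of linear ODEs once $(K,\Lambda)$ is known, so they need no separate treatment.

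\textbf{Martingale optimality argument.} For $\alpha\in\mathcal{A}$ with wealth $X=X^\alpha$ and $m_t:=\E[X_t]$, I would introduce
\[
v_t^\alpha:=K_t\big(X_t-m_t\big)^2+\Lambda_t m_t^2+2Y_t m_t+R_t,\qquad \mathcal{V}_t^\alpha:=\E[v_t^\alpha]+\E\Big[\int_0^t(\alpha_s-w_rX_s)^\top\Gamma(\alpha_s-w_rX_s)\,ds\Big].
\]
By construction $\mathcal{V}_0^\alpha=\Lambda_0X_0^2+2Y_0X_0+R_0$ is independent of $\alpha$, and the terminal data $K_T=\mu$, $\Lambda_T=0$, $Y_T=-\tfrac12$, $R_T=0$ give $\mathcal{V}_T^\alpha=J(\alpha)$. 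Applying Itô's formula to $v_t^\alpha$ — with $dm_t=\E[\alpha_t]^\top b\,dt$ and the splitting $\alpha_t=(\alpha_t-\E[\alpha_t])+\E[\alpha_t]$, $X_t-m_t$ playing the role of the fluctuation state — and taking expectations yields $\tfrac{d}{dt}\mathcal{V}_t^\alpha=\E[A_t]+B_t$, where $A_t$ gathers the terms carrying $\alpha_t-\E[\alpha_t]$ and $X_t-m_t$ while $B_t$ carries $\E[\alpha_t]$ and $m_t$. Completing the square in $\alpha_t-\E[\alpha_t]$ in $A_t$, the leftover coefficient of $(X_t-m_t)^2$ is exactly $K_t'-(K_tb-\Gamma w_r)^\top S_t^{-1}(K_tb-\Gamma w_r)+w_r^\top\Gamma w_r$, which vanishes by the $K$-ODE, leaving $\E[A_t]\ge0$ since $S_t\succ0$; completing the square in $\E[\alpha_t]$ in $B_t$ leaves a quadratic polynomial in $m_t$ whose coefficients of $m_t^2$, $m_t$ and $1$ vanish precisely by the $\Lambda$-ODE, the $Y$-equation and the $R$-equation respectively, so $B_t\ge0$. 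Hence $t\mapsto\mathcal{V}_t^\alpha$ is nondecreasing and $J(\alpha)=\mathcal{V}_T^\alpha\ge\mathcal{V}_0^\alpha=\Lambda_0X_0^2+2Y_0X_0+R_0$. Both squares vanish simultaneously iff $\alpha_t=S_t^{-1}\Gamma w_rX_t-S_t^{-1}b[K_tX_t+Y_t-(K_t-\Lambda_t)m_t]$, i.e.\ $\alpha=\alpha^\Gamma$; since $X^{\alpha^\Gamma}$ solves a linear SDE with bounded coefficients it is square-integrable, so $\alpha^\Gamma\in\mathcal{A}$, whence $\mathcal{V}_t^{\alpha^\Gamma}$ is constant, $J(\alpha^\Gamma)=\mathcal{V}_0=V_0$, $\alpha^\Gamma$ is optimal, and the formula for $V_0$ follows; strict positivity of $S_t$ also gives uniqueness of the minimiser.

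\textbf{Main obstacle.} Once the ansatz $v_t^\alpha$ is written down the algebra is pure bookkeeping; the delicate point is making the martingale step rigorous. For a general $\alpha\in\mathcal{A}$ one controls only $\E\int_0^T|\alpha_t|^2dt$ and $\sup_t\E[X_t^2]$, which does not guarantee that the $dW$-term $2K_t(X_t-m_t)\,\alpha_t^\top\sigma\,dW_t$ in Itô's formula is a true martingale. I would circumvent this by localisation: stop at $\tau_n:=\inf\{t:\,|X_t|+\int_0^t|\alpha_s|^2ds\ge n\}\wedge T$, run the computation on $[0,t\wedge\tau_n]$ where the stochastic integral has zero mean, then let $n\to\infty$ with $\tau_n\uparrow T$, using nonnegativity of the running cost and of the completed squares together with Fatou's lemma to recover $J(\alpha)\ge\mathcal{V}_0$; for $\alpha^\Gamma$ the explicit linear dynamics provide ample integrability to pass to the limit with equality. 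Beyond this, the only genuinely creative ingredient is guessing the ansatz, in particular the $(X_t-m_t)^2$ versus $m_t^2$ split forced by the McKean--Vlasov structure of the variance term.
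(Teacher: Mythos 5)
Your proposal is correct and follows the same overall route as the paper: the same quadratic ansatz $K_t(x-\bar x)^2+\Lambda_t\bar x^2+2Y_t x+R_t$ (your placing of the linear term on $m_t$ rather than $X_t$ is immaterial after taking expectations), the same weak martingale optimality principle of \cite{basei2019weak}, and the same completion-of-squares identification of the four ODEs, the optimal feedback $\alpha^\Gamma$ and the value $\Lambda_0X_0^2+2Y_0X_0+R_0$. The genuine differences are in the well-posedness step and in one point of rigor. The paper treats the $K$- and $\Lambda$-equations as Riccati equations attached to auxiliary standard LQ control problems and invokes the existence/uniqueness/nonnegativity results of Yong--Zhou, verifying the required sign condition $w_r^\top\Gamma w_r-w_r^\top\Gamma S_t^{-1}\Gamma w_r\ge 0$ via the Woodbury identity; you instead argue directly on the ODEs, which is more elementary and self-contained: your observation that $K\equiv 0$ is itself a solution (so uniqueness forces $K_t>0$) is a clean replacement for the LQ argument, and your inequality $S_t\succeq\Gamma\Rightarrow S_t^{-1}\preceq\Gamma^{-1}$ reaches the same key sign condition without Woodbury. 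The one place where your ODE argument is looser than the paper's citation is the nonnegativity of $\Lambda$: the statement that the trajectory ``cannot cross into negative values'' because the drift is $\le 0$ at $\Lambda=0$ is not by itself conclusive when the terminal datum sits exactly on the boundary; it is tightened immediately by writing the Riccati equation as $\dot\Lambda_t=a_t\Lambda_t+b_t\Lambda_t^2-Q_t$ with $Q_t=w_r^\top\Gamma w_r-w_r^\top\Gamma S_t^{-1}\Gamma w_r\ge0$ and reading off $\Lambda_t=\int_t^T Q_s\,e^{-\int_t^s(a_u+b_u\Lambda_u)\,du}\,ds\ge0$ from Duhamel. Finally, your localisation/Fatou treatment of the stochastic integral $2K_t(X_t-m_t)\alpha_t^\top\sigma\,dW_t$, which is only a local martingale under the mere $L^2$ admissibility of $\alpha$, addresses a point the paper leaves implicit; this is a worthwhile addition rather than a divergence.
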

\begin{proof}  Given the existence of a pair $\left( K, \Lambda \right)\in C\left([0,T], \R_{+}^{*}\right) \times C\left([0,T], \R_{+}\right)$ solution to \eqref{eq:ode_system},  the optimality of the control process 
in \eqref{eq:optimal_control} follows by the weak version of the martingale optimality principle as  developed in \cite{basei2019weak}. The arguments are recalled in appendix \ref{appendix:mv_tracking}.

Here, let us verify the existence and uniqueness of a solution to the system \eqref{eq:ode_system}. 
\begin{enumerate}[label=(\roman*)]
    \item We first consider the equation for $K$, which is a scalar Riccati equation. The equation for $K$ is associated to the standard linear-quadratic stochastic control problem:
    \begin{equation}
        \label{eq:K_control_prob}
        \tilde v(t,x) := \underset{\alpha \in \mathcal{A}}{\inf} \E \left[ \int_t^T \left( w_r^\top \Gamma  w_r (\tilde{X}_s^{t,x,\alpha})^2 -2 \alpha_s^\top \Gamma  w_r \tilde{X}_s^{t,x,\alpha} +\alpha_s^\top \Gamma  \alpha_s \right)ds \right]
    \end{equation}
    where $\tilde{X}_s^{t,x,\alpha}$ is the controlled linear dynamics solution to 
    \begin{equation}
        d\tilde{X}_s = \alpha_s^\top b\ ds + \alpha_s^\top \sigma dW_s,\quad t\leq s\leq T,\ \tilde{X}_t = x.
    \end{equation}
    By a standard result in control theory \cite[Ch. 6, Thm. 6.1, 7.1, 7.2]{yong1999stochastic}, there exists a unique solution $K\in C([0,T], \R_+)$ to the first equation of system \eqref{eq:ode_system} (more, $K\in C([0,T], \R_+^*)$ if $w_r$ is nonzero). In this case, we have $\tilde v(t,x)=x^\top K_t x$. 
    \item Given $K$, we consider the equation for $\Lambda$. This is also a scalar Riccati equation. By the same arguments as for the $K$ equation, there exists a unique solution $\Lambda \in C([0,T], \R_+)$ to the second equation of \eqref{eq:ode_system}, provided that
    \begin{equation}
    \Lambda_T \geq 0,\quad \quad w_r^\top \Gamma w_r - w_r^\top \Gamma \left( K_t \Sigma + \Gamma \right)^{-1} \Gamma w_r \geq 0, \quad \quad K_t \Sigma + \Gamma \geq \delta \mathbb{I}_d,\quad \quad 0\leq t\leq T 
    \end{equation}
    for some $\delta>0$. We already have that $\Lambda_T = 0$. From the fact that $K> 0$, together with the  nondegeneracy condition on the matrix $\Sigma$, we have that $K_t \Sigma +\Gamma \geq \Gamma \geq \delta \mathbb{I}_d$. 
    Since $\Gamma$ $>$ $0$, and under the nondegeneracy condition of matrix $\Sigma$, we can use the Woodbury matrix identity to obtain 
    \begin{equation}
        \left( K_t \Sigma + \Gamma \right)^{-1} = \Gamma^{-1} - \Gamma^{-1} \left( \Gamma^{-1} + \frac{\Sigma^{-1}}{K_t} \right)^{-1} \Gamma^{-1}.
    \end{equation}
    We then get 
    \begin{equation}
        w_r^\top \Gamma w_r - w_r^\top \Gamma \left( K_t \Sigma + \Gamma \right)^{-1} \Gamma w_r = w_r^\top \left( \Gamma^{-1} + \frac{\Sigma^{-1}}{K_t} \right)^{-1} w_r \geq 0.
    \end{equation}

    \item Given $(K, \Lambda)$, the equation for $Y$ is a linear ODE, whose unique continuous solution is explicitly given by
    \begin{equation}
        Y_t = -\frac{1}{2}e^{-\int_t^T b^\top S_s^{-1} (\Lambda_s b - \Gamma w_r)ds}. 
    \end{equation}
    \item Given $(K, \Lambda, Y)$, $R$ can be directly integrated into
    \begin{equation}
        R_t = \frac{1}{2}\int_t^T b^\top S_s^{-1}b ~ e^{-2\int_s^T b^\top S_u^{-1} (\Lambda_u b - \Gamma w_r)du} ~ ds. 
    \end{equation}
\end{enumerate}
\end{proof}

\vspace{3mm}

We can see from the expression of the optimal control \eqref{eq:optimal_control} that the allocation in the risky assets has two components. One component is determined by the vector $S_t^{-1}\Gamma w_r = \left( K_t \Sigma + \Gamma \right)^{-1}\Gamma w_r$ with leverage $X_t$, and the second one by the vector $S_t^{-1}b = \left( K_t \Sigma + \Gamma \right)^{-1} b$ with leverage $\left[K_{t}X_t +Y_{t}-(K_t - \Lambda_{t}) \E[X_{t}]\right]$. Computing the average wealth $\overline{X}$ $=$ $\E[X]$  
associated to $\alpha
^\Gamma$, we can express the control $\alpha^\Gamma$ as a function of the initial wealth of the investor $x_0$ and the current wealth $X_t$

\begin{align}
\label{eq:optimal_control_x0}
    \alpha_t^\Gamma =& \;  S_t^{-1} \Gamma w_r X_t-\Lambda_t S_t^{-1}b \left(X_{0}C_{0,t} + \frac{1}{2}H_t \right)\\
    &\; +S_{t}^{-1}b\left[K_{t}\left(X_{0}C_{0,t}+ \frac{1}{2}H_t -X_t\right) -Y_{t}\right]
\end{align}
where we set $C_{s, t}:= e^{-\int_{s}^{t}b^{\top}S_{u}^{-1}\left(\Lambda_{u}b-\Gamma w_r\right)du}$ and $H_t :=C_{t,T}\int_{0}^{t} C_{s,t}^2\ b^\top S_s^{-1} b \ ds$.

\vspace{1mm}

\begin{remark}
In the case when $\Gamma$ is the null matrix,  $\Gamma$ $=$ $\mathbf{0}$,  we see that the first component of the optimal control \eqref{eq:optimal_control_x0} vanishes,
\begin{equation}
    Y_t = -\frac{1}{2},\quad R_t = \frac{1}{4\mu}\left( 1-e^{b^\top \Sigma^{-1} b\ (T-t)} \right),
\end{equation}
and the system of ODES \eqref{eq:ode_system} of $(K, \Lambda)$ becomes 
\begin{equation}
\label{eq:ode_system_classic}
    \begin{cases}
dK_{t}= K_t b^\top \Sigma^{-1} b\ dt, & K_{T}=\mu\\
\\
d\Lambda_{t}=\frac{\Lambda_t^2}{K_t} b^\top \Sigma^{-1} b\ dt, & \Lambda_{T}=0,
    \end{cases}
\end{equation}
which yields the explicit forms 
\begin{equation}
    K_t = \mu e^{- b^\top \Sigma^{-1} b\ (T-t)}, \quad \Lambda_t = 0.
\end{equation}
We get $S_t^{-1}=\frac{\Sigma^{-1}}{K_t}=\frac{\Sigma^{-1}e^{b^\top \Sigma^{-1} b\ (T-t)}}{\mu}$, $C_{\cdot, \cdot}=1$ and $H_t=\frac{1}{\mu}\int_{0}^{t}b^\top \Sigma^{-1} b\ e^{b^\top \Sigma^{-1} b\ (T-s)} ds$. The first line of the optimal control $\alpha^\Gamma$ equation vanishes and the second line can be rewritten as 
\begin{align}
    \alpha_t^\Gamma = \Sigma^{-1}b\left[\frac{1}{2\mu}\left(e^{b^\top \Sigma^{-1} b\ (T-t)} + \int_{0}^{t}b^\top \Sigma^{-1} b\ e^{b^\top \Sigma^{-1} b\ (T-s)} ds \right) + X_{0}-X_t \right].
\end{align}
Computing the integral in this expression, we recover the optimal control of the classical mean-variance problem \eqref{eq:optimal_control_classical}.
\end{remark}


\begin{remark}[\textbf{Limit of $\alpha_t^\gamma$ for $\Gamma =\gamma\mathbb{I}_d \rightarrow \infty$}]
If we consider $\Gamma$ in the form  $\Gamma=\gamma \mathbb{I}_d$, the optimal control can be rewritten as 
\begin{equation}
\label{eq:control_expression_limit}
    \alpha_{t}^{\gamma}= \left( \mathbb{I}_d + \frac{K_t}{\gamma} \Sigma \right)^{-1} w_r X_t - \frac{1}{\gamma} \left( \mathbb{I}_d + \frac{K_t}{\gamma} \Sigma \right)^{-1}b\left[K_{t}X_t +Y_{t}-(K_t - \Lambda_{t})\overline{X}_{t}\right].
\end{equation}
We show in appendix \ref{appendix:bounded_quotients} that $K_t$ and $\Lambda_t$ are bounded functions of the penalization parameter $\gamma$, thus $\frac{K_t}{\gamma},~ \frac{\Lambda_t}{\gamma}\underset{\gamma\rightarrow \infty}{\longrightarrow}0$. \\
We rewrite $Y_t$ as 
\begin{equation}
        Y_t = -\frac{1}{2}e^{b^\top w_r (T-t)} e^{ -\int_t^T \frac{1}{\gamma} b^\top \left( \mathbb{I}_d+\frac{K_s}{\gamma}\Sigma \right)^{-1}\left( \Lambda_s b +K_s \Sigma w_r \right)ds }
    \end{equation}
    and we get that $Y_t \underset{\gamma\rightarrow\infty}{\longrightarrow}-\frac{1}{2}e^{b^\top w_r (T-t)}$. Thus the second term of \eqref{eq:control_expression_limit} vanishes and we get
    \begin{equation}
        \alpha_t^\gamma \underset{\gamma\rightarrow\infty}{\longrightarrow} w_r X_t
    \end{equation}
   which corresponds to the reference portfolio.
\end{remark}

\begin{remark}[\textbf{Expansion for $\Gamma =\gamma\mathbb{I}_d \rightarrow 0$}]
We take $\Gamma = \gamma \mathbb{I}_d$. Since the covariance matrix $\Sigma$ is symmetric,  there exists an invertible matrix $Q\in \R^{d\times d}$ and a diagonal matrix $D\in \R^{d\times d}$ such that $\Sigma=Q\cdot D\cdot Q^{-1}$. We can then rewrite the matrix $S_t^{-1}:=\left( K_t \Sigma +\gamma\mathbb{I}_d \right)^{-1}$ as 
\begin{equation}
    S_t^{-1} = Q\cdot \left( K_t D +\gamma \mathbb{I}_d \right)^{-1} Q^{-1}
\end{equation}
with
\begin{equation}
\left(\left(K_t D+\gamma \mathbb{I}_d\right)^{-1}\right)_{ij} =
\begin{cases}
\frac{1}{K_t d_i + \gamma}\ &\textrm{if}\ i=j \\
0\ &if\ i\neq j
\end{cases}
\end{equation}
where $d_i$ is the $i$-th diagonal value of the diagonal matrix $D$. From the nondegeneracy condition of the covariance matrix, we have $d_i>0,\ \forall i\in \llbracket 1, n \rrbracket$. 
As $\gamma \longrightarrow 0$, we want to write the Taylor expansion of the diagonal elements of the inverse matrix $\left(D+\gamma \mathbb{I}_d\right)^{-1}$ equal to $\frac{1}{K_t d_i}\left( 1 + \frac{\gamma}{K_t d_i} \right)^{-1}$.
We have that $K_t \underset{\gamma\rightarrow 0}{\longrightarrow}\mu e^{- \rho(T-t)}$, thus $\frac{\gamma}{K_t}\underset{\gamma\rightarrow 0}{\longrightarrow}0$. 
We can then write the Taylor expansion of the matrix $S_t^{-1}$ as 
\begin{equation}
    S_t^{-1}=\frac{\Sigma^{-1}}{K_t}- \gamma \frac{\left( \Sigma^{-1} \right)^2}{K_t^2} + O(\gamma^2)
\end{equation}
keeping only the terms up to the linear term in $\gamma$. \\
Putting this expression in the differential equation of $K$, and keeping only the terms up to the linear term in $\gamma$, we get the differential equation
\begin{equation}
    \label{eq:K_ODE_expansion}
    \frac{dK_t}{dt}= K_t \rho -\gamma \| w_r + \Sigma^{-1}b \|^2 + O(\gamma^2),
\end{equation}
where we set $\rho:=b^\top \Sigma^{-1}b$. We look for a solution to this equation of the form
\begin{equation}
    K_t^\gamma = K_t^0 + \gamma K_t^1 + O(\gamma^2).
\end{equation}
Putting this expression in the differential equation \eqref{eq:K_ODE_expansion}, we get two differential equations, for the leading order and the linear order in $\gamma$ respectively
\begin{equation}
    \begin{cases}
        \frac{dK_t^0}{dt}=K_t^0 \rho,\quad &K_T^0 = \mu\\ 
        \frac{dK_t^1}{dt}=K_t^1 \rho - \| w_r + \Sigma^{-1}b \|^2, \quad &K_T^1 = 0
    \end{cases}
\end{equation}
which yield the explicit solution 
\begin{equation}
    K_t^\gamma = K_t^0 + \gamma \| w_r + \Sigma^{-1}b \|^2 ~ \frac{1-e^{-\rho (T-t)}}{\rho} + O(\gamma^2)
\end{equation}
where $K_t^0 = \mu e^{- \rho(T-t)}$ is the solution to the differential equation in the unpenalized case.\\
From the expansion for $K$, we can write the expansion of the differential equation for $\Lambda$ up to the linear term in $\gamma$. 
We use the expansion 
\begin{equation}
    \frac{1}{K_t^\gamma}=\frac{1}{K_t^0}\left( 1 - \gamma \| w_r + \Sigma^{-1}b \|^2 ~ \frac{1-e^{-\rho (T-t)}}{K_t^0 \rho} \right) +O(\gamma^2)
\end{equation}
and we get the following expansion of the differential equation of $\Lambda$
\begin{align}
    \label{eq:lamb_ODE_expansion}
    \frac{d\Lambda_t}{dt}=& \frac{\Lambda_t^2}{K_t^0} \rho \left( 1 - \gamma \| w_r + \Sigma^{-1}b \|^2 ~ \frac{1-e^{-\rho (T-t)}}{K_t^0 \rho} \right)\\
    &-\gamma \left( 2 \frac{\Lambda_t}{K_t^0}b^\top \Sigma^{-1}w_r - \left(\frac{\Lambda_t}{K_t^0}\right)^2 b^\top \Sigma^{-2}b - \| w_r \|^2 \right) + O(\gamma^2).
\end{align}
As before, we look for a solution of this differential equation of the form
\begin{equation}
    \Lambda_t^\gamma = \Lambda_t^0 + \gamma \Lambda_t^1 + O(\gamma^2).
\end{equation}
Plugging this expression into  the equation \eqref{eq:lamb_ODE_expansion}, we get the two following differential equations
\begin{equation}
    \begin{cases}
    \frac{d\Lambda_t^0}{dt}= \frac{\left(\Lambda_t^0\right)^2}{K_t^0} \rho,\quad &\Lambda_T^0 = 0\\ \\
    \frac{d\Lambda_t^1}{dt}= 2\frac{\Lambda_t^0 \Lambda_t^1}{K_t^0}\rho -\left(\frac{\Lambda_t^0}{K_t^0}\right)^2 \rho \| w_r + \Sigma^{-1}b \|^2 ~ \frac{1-e^{-\rho (T-t)}}{\rho} \\
    - \left( 2 \frac{\Lambda_t^0}{K_t^0}b^\top \Sigma^{-1}w_r + \left(\frac{\Lambda_t^0}{K_t^0}\right)^2 b^\top \Sigma^{-2}b + \| w_r \|^2 \right), \quad &\Lambda_T^1=0. 
    \end{cases}
\end{equation}
The first differential equation yields the solution $\Lambda_t^0 = 0,\ \forall t\in [0,T]$. Replacing $\Lambda_t^0$ by this value in the second differential equation, we get the equation
\begin{equation}
    \frac{d\Lambda_t^1}{dt}=-\| w_r \|^2
\end{equation}
and obtain the solution
\begin{equation}
    \Lambda_t^\gamma = \gamma \| w_r \|^2 (T-t) + O(\gamma^2).
\end{equation}
We can also compute the first order expansion of $C_{\cdot,\cdot}$
\begin{align}
    C_{s,t}^\gamma=& 1-\gamma \int_s^t \frac{\rho}{K_u^0}\left( \| w_r \|^2 (T-u) - \frac{b^\top \Sigma^{-1}w_r}{\rho} \right)du + O(\gamma^2)\\
     =& 1-\gamma C_{s,t}^1 +O(\gamma^2)
\end{align}
where we set 
\begin{equation}
    C_{s,t}^1 := \frac{e^{\rho(T-t)}}{\mu\rho}\left\{ \rho \|w_r\|^2(t-s)  + \left( e^{\rho(t-s)}-1 \right)\left( \|w_r\|^2 (\rho T-1) - b^\top \Sigma^{-1}w_r \right) \right\},
\end{equation}
 and we have
\begin{equation}
    Y_t^\gamma = -\frac{1}{2} +\frac{\gamma}{2} C_{t,T}^1.
\end{equation}
The last expansion we need to compute before rewritting the optimal control is the expansion of $H_t$. We can rewrite 
\begin{equation}
    H_t = \frac{e^{\rho T}}{\mu}\left( 1-e^{-\rho t} \right) - \gamma H_t^1 +O(\gamma^2)
\end{equation}
with
\begin{equation}
    H_t^1 := \int_0^t \left( 2C_{s,t}^1 +C_{t,T}^1 \right) \frac{b^\top \Sigma^{-1}b}{K_s^0} ds + \int_0^t b^\top \frac{\Sigma^{-1}}{\left(K_s^0 \right)^2} \left( K_s^1 \mathbb{I}_d + \Sigma^{-1} \right)b~ ds.
\end{equation}
As shown in appendix \ref{appendix:control_DL}, we can rewrite the optimal control
\begin{equation}
    \label{eq:optimal_control_expansion}
    \alpha_{t}^{\gamma}= \Sigma^{-1}b~ \alpha_t^0 +\gamma \left( \Sigma^{-1}w_r~ \alpha_t^{1,3} - \Sigma^{-2}b~ \alpha_t^{1,2} - \Sigma^{-1}b~ \alpha_t^{1,1} \right) + O(\gamma^2)
\end{equation}
where we set $\Sigma^{-2}$ $:=$ $(\Sigma^{-1})^2$, and with 
\begin{equation}
    \label{eq:control_expansion_weights}
    \begin{cases}
        \alpha_t^0 = \frac{1}{2\mu} e^{\rho T} + X_0 - X_t \\
        \alpha_t^{1,1} = \frac{\|w_r\|^2}{K_t^0} (T-t)\left( X_0 + \frac{e^{\rho T}}{\mu}\left( 1-e^{-\rho t} \right) \right) + X_0 C_{0,t}^1 + \frac{H_t^1}{2} + \frac{K_t^1}{2\left( K_t^0 \right)^2} + \frac{C_{t,T}}{2K_t^0} \\
        \alpha_t^{1,2} = \frac{e^{\rho T}}{2K_t^0 \mu} \left( 1-e^{-\rho t} \right) + \frac{1}{2\left( K_t^0 \right)^2} \\
        \alpha_t^{1,3} = \frac{X_t}{K_t^0}.
    \end{cases}
\end{equation}
We see that for $\gamma=0$, we recover the classical mean-variance optimal control. For non-zero values of $\gamma$, we see that a mix of three different portfolio allocations is obtained. The weight of the allocation $\Sigma^{-1}b$ is modified and two allocations $\Sigma^{-2}b$ and $\Sigma^{-1}w_r$ appear with weights $\gamma \alpha_t^{1,2}$ and $\gamma \alpha_t^{1,3}$.

From  this expansion of the control $\alpha^\gamma$, we can compute the first order asymptotic expansion in $\gamma$ of the equation giving the relation between the variance of the terminal wealth of the portfolio and its expectation. In the classical mean-variance case, this equation is called the \textit{efficient frontier} formula. As shown in appendix \ref{appendix:var_computation}, with the tracking error penalization, the first order asymptotic expansion in $\gamma$ gives 
\begin{align}
    Var(X_T) =& \frac{e^{-\rho T}}{1-e^{-\rho T}} \left( \overline{X_T}^0 - X_0 \right)^2 \\ 
    &+ \gamma \left\{ \frac{b^\top \Sigma^{-1}w_r}{\mu^2}\left[ X_0 T - \frac{1}{2\mu}e^{\rho T} \left(T - \frac{1 -e^{-\rho T} }{\rho}\right) \right]\right.\\
    &\left.- \int_0^T \left( \frac{\rho}{\mu}\alpha_s^{1,1} + \frac{b^\top \Sigma^{-2} b}{\mu} \alpha_s^{1,2} \right)e^{-\rho (T-s)}ds\right\} + O(\gamma^2).
\end{align}
The leading order term corresponds to the efficient frontier equation of the classical mean-variance allocation computed in \cite{zhou2000continuous}, and thus  
for $\gamma=0$, we  recover this classical result. The linear term in $\gamma$ contains contributions of the three perturbative allocations. A modification of "leverage" of the original mean-variance allocation $\Sigma^{-1}b$ and two different allocations $\Sigma^{-2}b$ and $\Sigma^{-1}w_r$.
\end{remark}

\section{Applications and numerical results}

In this section, we  apply the results of the previous section and study the allocation obtained by considering  four different static portfolios as reference. First, we shall study these allocations on simulated data, in the case of misspecified parameters. The misspecification of parameters means that the market parameters used to compute the portfolio allocations are different from  the ones driving the stocks prices. This study allows us to estimate the impact of the  estimation error  on the portfolio performance. In a second time, we perform a backtest and run the different portfolios on real market data. To simplify the presentation, we will assume now that the tracking error penalization matrix is in the form  $\Gamma = \gamma \mathbb{I}_d$ with 
$\gamma\in \R_+^*$. With this simplification, we have $S_t^{-1}=\left( K_t \Sigma + \gamma \mathbb{I}_d \right)^{-1}$ and we can rewrite the system of ODEs \eqref{eq:ode_system} and the optimal control \eqref{eq:optimal_control} as

\begin{equation}
\label{eq:ode_system_simp}
    \begin{cases}
dK_{t}=\left\{ \left(K_t b - \gamma w_r \right)^\top S_t^{-1}\left(K_t b -\gamma w_r \right)-\gamma \left(w_r\right)^{\top}w_r \right\} dt, & K_{T}=\mu\\
\\
d\Lambda_{t}=\left\{ \left(\Lambda_t b -\gamma w_r \right)^\top S_t^{-1}\left(\Lambda_t b -\gamma w_r \right)-\gamma\left(w_r\right)^{\top} w_r \right\} dt, & \Lambda_{T}=0
    \end{cases}
\end{equation}
and
\begin{align}
\label{eq:optimal_control_simp}
    \alpha_t^\gamma =& \gamma S_t^{-1} w_r X_t-\Lambda_t S_t^{-1}b \left(X_{0}C_{0,t} + \frac{1}{2}H_t \right)\\
    &+S_{t}^{-1}b\left[K_{t}\left(X_{0}C_{0,t}+ \frac{1}{2}H_t -X_t\right) -Y_{t}\right]
\end{align}
where
\begin{equation}
    S_t = K_t\Sigma + \gamma \mathbb{I}_d,\quad C_{s, t}:= e^{-\int_{s}^{t}b^{\top}S_{u}^{-1}\left(\Lambda_{u}b-\gamma w_r\right)du},\quad Y_t=-\frac{1}{2}C_{t, T}.
\end{equation}

We will consider three different classical allocations as reference portfolio.
\begin{enumerate}
    \item \textbf{Equal-weights portfolio}: in this classical equal-weights portfolio, the same capital  is invested in each asset, thus
    \begin{equation}
        \label{eq:ew_weights}
        w_r^{\textrm{ew}} = \frac{1}{d}~e
    \end{equation} where $d$ is the number of risky assets considered and $e\in\R^d$ is the vector of ones.
    \item \textbf{Minimum variance portfolio}: the minimum variance portfolio is the portfolio which achieves the lowest variance while investing all its wealth in the risky assets. The weight vector of this portfolio is equal to
    \begin{equation}
    \label{eq:min_var_weights}
        w_r^{\textrm{min-var}} = \frac{\Sigma^{-1}e}{e^\top\Sigma^{-1}e}.
    \end{equation}
    These weights correspond to the one-period Markowitz portfolio when every asset expected return $b_i$ is taken equal to 1. In that case, only the portfolio variance is relevant and is minimized during the optimization process. \\
    \item \textbf{ERC portfolio}: the equal risk contributions (ERC) portfolio, presented in \cite{maillard2010properties} and in the monograph \cite{roncalli2013introduction} is constructed by choosing a risk measure and computing the risk contribution of each asset to the global risk of the portfolio. When the portfolio volatility is chosen as the risk measure, the principle of the ERC portfolio lays in the fact that the volatility function satisfies the hypothesis of Euler's theorem and can be reduced to the sum of its arguments multiplied by their first partial derivatives. The portfolio volatility $\sigma(w)=\sqrt{w
   ^\top \Sigma w}$ of a portfolio with weights vector $w\in\R
   ^d$ can then be rewritten as
    \begin{equation}
        \sigma(w)=\sum_{i=1}^d w^i \partial_i \sigma(w) = \sum_{i=1}^d \frac{w^i \left( \Sigma w \right)^i}{\sigma(w)}.
    \end{equation}
    The term under the sum $\frac{w^i \left( \Sigma w \right)^i}{\sigma(w)}$, corresponding to the $i$-th asset, can be interpreted as the contribution of this risky asset to the total portfolio volatility. The equal risk contribution allocation is then defined as the allocation in which these contributions are equal for all the risky assets of the portfolio, $\frac{w^i \left( \Sigma w \right)^i}{\sigma(w)} = \frac{w^j \left( \Sigma w \right)^j}{\sigma(w)}$ for every $i,j\in \llbracket 1,d \rrbracket$. The equal risk contribution allocation is thus obtained when the portfolio weights $w^*$ are given by
    \begin{equation}
        w^* = \left\{ w\in [0,1]^d: \ \sum_{i=1}^d w^i =1,\ w^i \left( \Sigma w \right)^i = w^j \left( \Sigma w \right)^j,\ \forall i,j\in \llbracket 1, d\rrbracket \right\}.
    \end{equation}
    With this risk measure, the ERC portfolio weights can be expressed in a closed-form only in the case where the correlations between every couple of stocks are equal, that is $\textrm{corr}(P_i, P_j)=c,\ \forall
   ~i,j\in \llbracket1, d\rrbracket $, with the additional assumption that $c\geq -\frac{1}{d-1}$. Under these assumptions, and with the constaint that $\sum_{i=1}^d \left(w_r^{\textrm{erc}}\right)_i =1$, the weights of this portfolio are equal to
   \begin{equation}
        \label{eq:erc_weights}
       \left(w_r^{\textrm{erc}}\right)_i=\frac{\sigma_i^{-1}}{\sum_{j=1}^d \sigma_j^{-1}}
   \end{equation}
   where $\sigma_i$ is the volatility of the $i$-th asset.
   
   In the general case, the weights of the ERC portfolio do not  have a closed form and must be computed numerically by solving the following optimization problem
   \begin{align}
       w_r^{\textrm{erc}}=\underset{w\in\R^d}{\argmin} \sum_{i=1}^d \sum_{j=1}^d \left( w^i \left(\Sigma w\right)^i - w^j \left(\Sigma w\right)^j \right)^2\\
       \textrm{s.t}\ e^\top w =1\ \textrm{and}\ 0\leq w^i \leq 1,\ \forall i\in \llbracket 1,d\rrbracket.
   \end{align}
   \item \textbf{Control shrinking (zero portfolio)}: this is the portfolio where all weights are equal to zero, $w_r^i =0$ for all $i$. This case corresponds to a shrinking of the controls of the penalized allocation, in the same spirit as the shrinking of regression coefficients in the Ridge regression (or Tikhonov regularization).
\end{enumerate}

\subsection{Performance comparison with Monte Carlo simulations}

In this section we compare, for each reference portfolio, the classical dynamic mean-variance allocation, the reference portfolio and the  ``tracking error" penalized portfolio. In a real investment situation, expected return and covariance estimates are noisy and biased. Thus, in order to compare the three portfolios and observe the impact of adding a tracking error penalization in the mean-variance allocation, we will run Monte Carlo simulations, assuming that the real-world expected returns $b_{\textrm{real}}$ and covariances $\sigma_{\textrm{real}}$ are equal to reference expected returns $b_0$ and covariances $\sigma_0$ plus some noise:
\begin{align}
    b_{0} = 
        \begin{pmatrix}
            0.12 \\
            0.14 \\
            0.16 \\
            0.10
        \end{pmatrix},
    \quad
    v_{0} = 
        \begin{pmatrix}
            0.20 \\
            0.30 \\
            0.40 \\
            0.50
        \end{pmatrix},
    \quad
    C_{0} =
        \begin{pmatrix}
            1. & 0.05 & -0.05 & 0.10 \\
            0.05 & 1. & -0.03 & 0.12 \\
            -0.05 & -0.03 & 1. & -0.13 \\
            0.10 & 0.12 & -0.13 & 1. 
        \end{pmatrix},
 \end{align}
 with the volatilties $v_0$ and correlations $C_0$ and 
 \begin{equation}
     b_{\textrm{real}}= b_0 + \epsilon \times \textrm{noise},\quad \sigma_{\textrm{real}}= \sigma_0 + \epsilon \times \textrm{noise}
 \end{equation}
 where the covariance matrix $\sigma_0$ is obtained from $v_0$ and $C_0$.
    The noise follows a standard normal distribution $\mathcal{N}(0,1)$ and $\epsilon$ is its magnitude. We use Monte Carlo simulations to estimate the expected Sharpe ratio of each portfolio, equal to the average of the portfolio daily returns $R$ divided by the standard deviation of those returns: 
$\E \big[\frac{\E[R]}{\textrm{Stdev}(R)}\big]$. 

\vspace{1mm}

 We consider an investment horizon of one year, with 252 business days and a daily rebalancing of the portfolio. The risk aversion parameter $\mu$ is chosen so that the targeted annual return of the classical mean-variance allocation is equal to $20\%$, thus $\mu = \frac{e^{b^{\top} \Sigma^{-1} b}}{2x_0 * 1.20}$ according to \cite{zhou2000continuous}. The initial wealth of the investor $x_0$ is chosen equal to 1 and we choose the penalization parameter $\gamma= \mu/100$. Indeed, as the value of $\mu$ depends on the value of the stocks expected return and covariance  matrix  and  on  the  targeted  return,  and  can  be  very  big,  we  express $\gamma$  a function of this $\mu$ in order for the penalization to be relevant and non-negligible.
 
 For each reference portfolio, we compare the reference portfolio, the classical mean-variance allocation and the penalized one for values of noise amplitude $\epsilon$ ranging from 0 to 1. For each value of $\epsilon$, we run 2000 scenarios and we plot the graphs of the average Sharpe ratio as a function of $\epsilon$.
 
 On the following graphs, we can see that in the four cases, the mean-variance and the penalized portfolios are superior to the reference. In the case where the equal weights portfolio is chosen as reference, the penalized portfolio's Sharpe ratio is lower than the mean-variance one for small values of $\epsilon$. For $\epsilon$ greater than approximately 0.25, the penalized portfolio's Sharpe ratio becomes larger and the gap with the mean-variance's Sharpe tends to increase with $\epsilon$. The same phenomenon occurs in the case where the ERC portfolio is chosen as reference, with a smaller gap between the mean-variance and penalized portfolios' Sharpe ratios. When the minimum variance portfolio is chosen as reference, the penalized portfolio's Sharpe ratio is lower than the one of the mean-variance portfolio for all $\epsilon$ in the interval $[0,1]$. This is certainly due to the sensitivity of the minimum variance portfolio to the estimator of the covariance matrix. Finally, in the case of the control shrinking, the Sharpe ratio of the penalized portfolio is significantly  higher that the Sharpe ratio of the mean-variance portfolio, for every value of the noise amplitude $\epsilon$ in the interval $[0,1]$.

\clearpage

\begin{itemize}
\item Equal-weights reference portfolio
\begin{figure}[h!]
\centering
  \includegraphics[width=9cm,height=7cm]{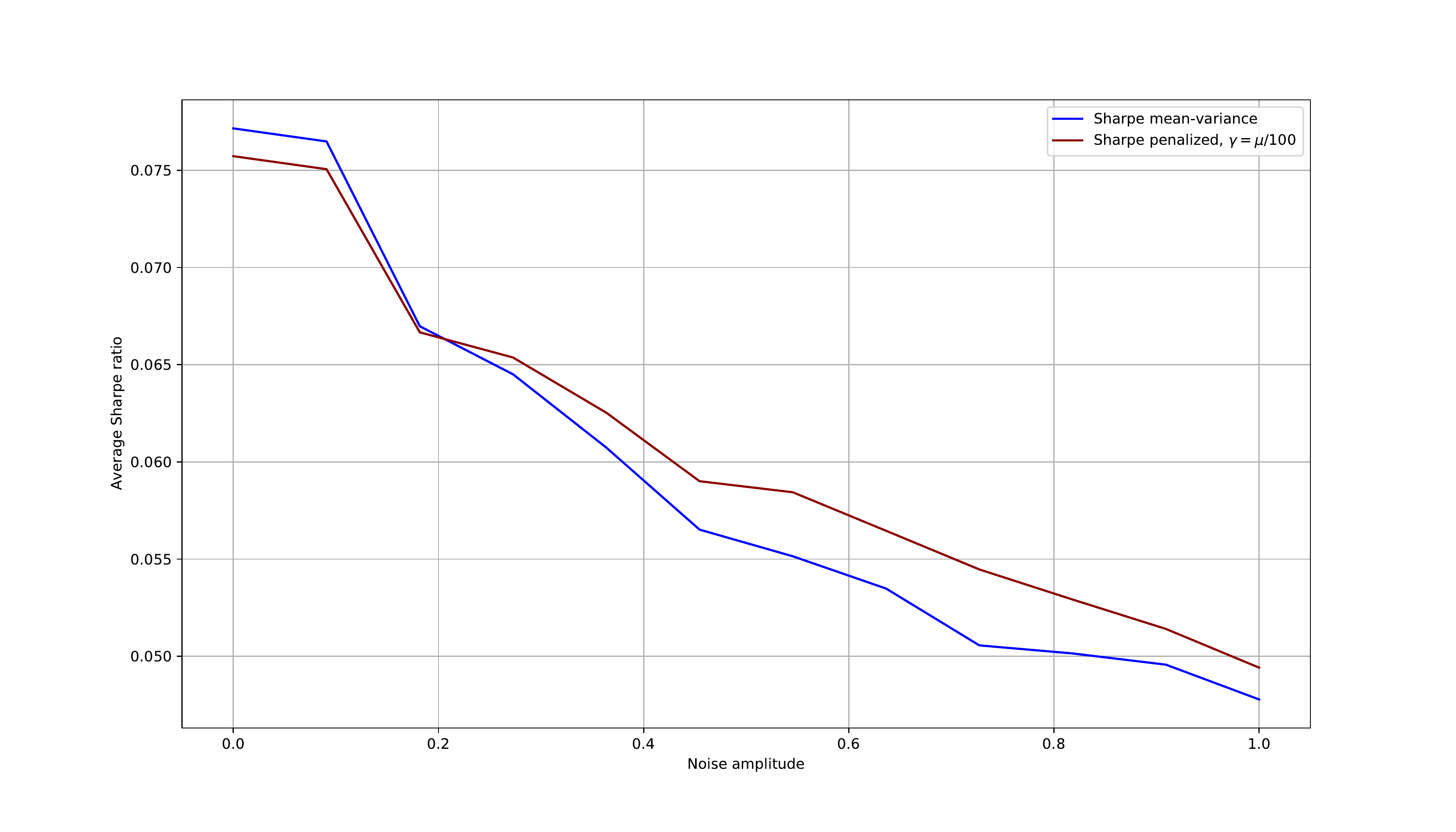}
\label{fig:simu_sharpe_equalw_ref}
\caption{The highest average Sharpe ratio attained by the equal-weight portfolio is equal to 0.047 for $\epsilon = 0$.}
\end{figure}


\item Minimum-variance reference portfolio
\begin{figure}[h!]
\centering
\includegraphics[width=9cm,height=7cm]{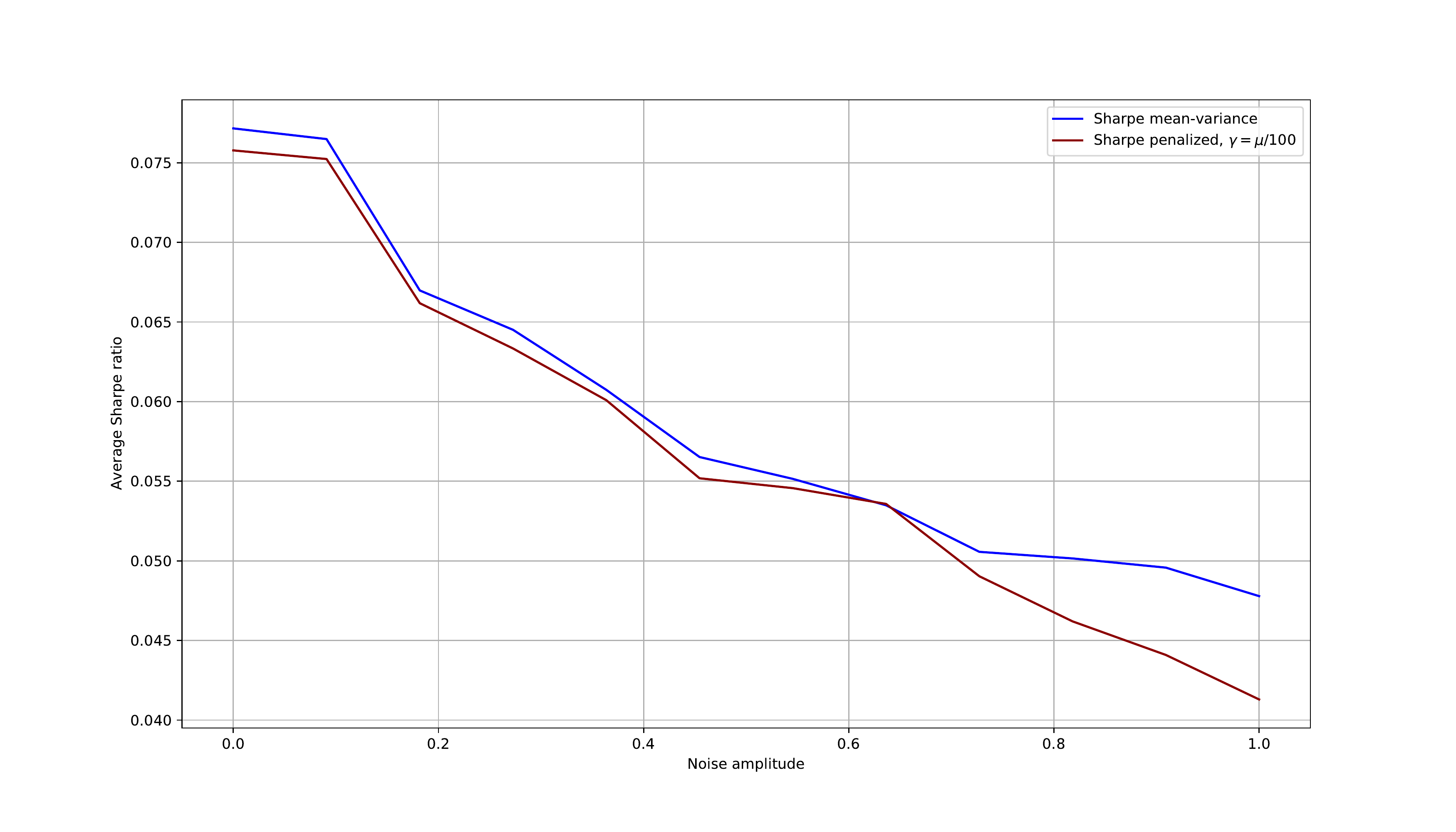}
\caption{The highest average Sharpe ratio attained by the minimum-variance portfolio is equal to 0.057 for $\epsilon = 0$.}
\label{fig:simu_sharpe_minvar_ref}
\end{figure}

 \item ERC reference portfolio
\begin{figure}[h!]
\centering
  \includegraphics[width=10cm,height=7cm]{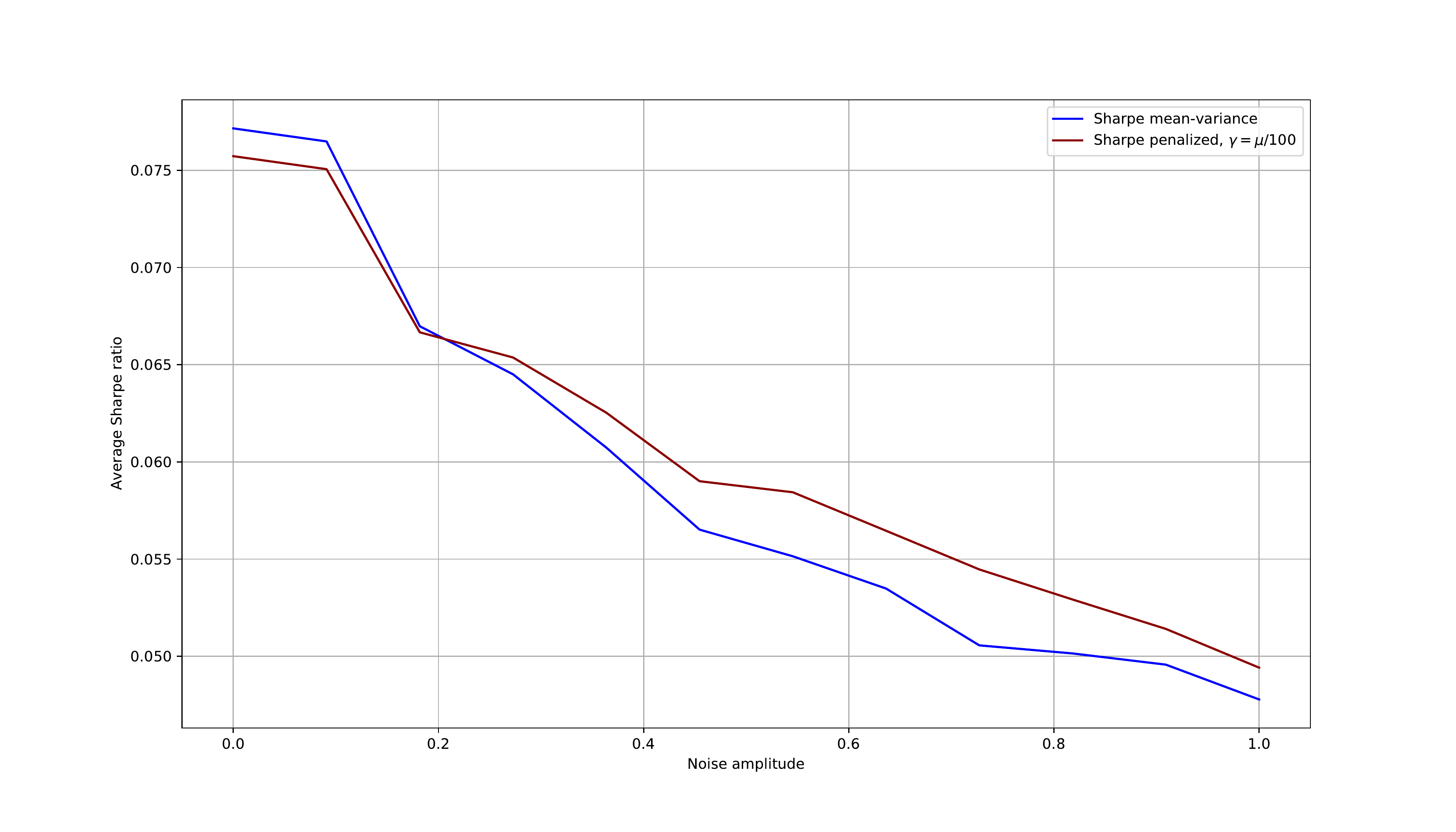}
\caption{The highest average Sharpe ratio attained by the ERC portfolio is equal to 0.051 for $\epsilon = 0$.}
\label{fig:simu_sharpe_ERC_ref}
\end{figure}

\item Control shrinking (zero reference) 
\begin{figure}[h!]
\centering
 \includegraphics[width=10cm,height=7cm]{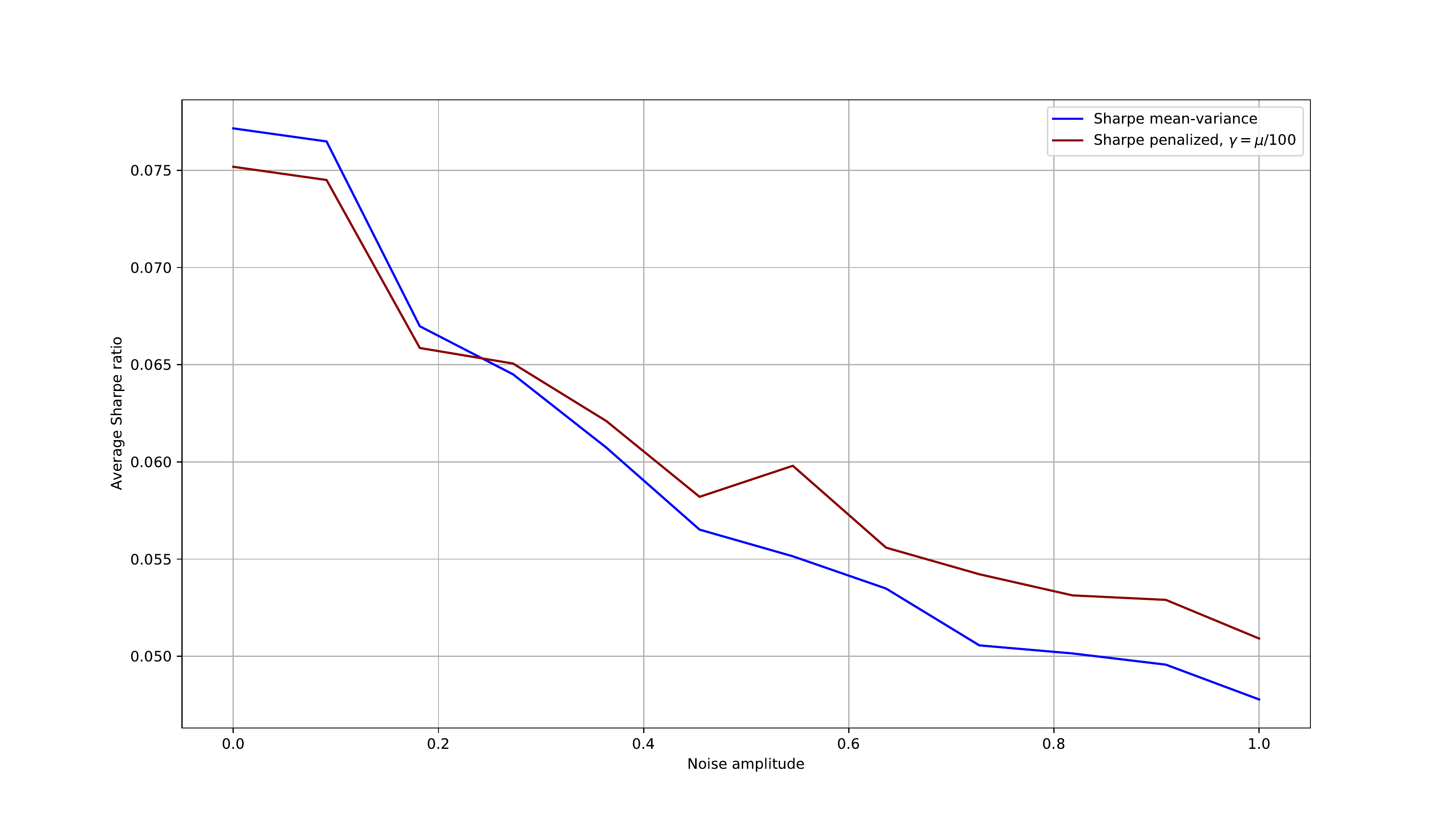}
\caption{In this case the reference weights are equal to zero, and no Sharpe ratio is computed for the reference portfolio.}  
\label{fig:simu_sharpe_zero_ref}
\end{figure}
\end{itemize}

\subsection{Performance comparison on a backtest}
We now compare the different allocations on a backtest based on adjusted close daily prices available on Quandl between 2013-09-03 and 2017-12-28 for four stocks: Apple, Microsoft, Boeing and Nike. Here we chose a value of $\mu$ which corresponds to an annual expected return of $25\%$. In our example, we express again $\gamma$ as a function of $\mu$ and we consider two different values, $\gamma = \mu$ and $\gamma = \mu/100$.

Figures \ref{fig:backtest_equalw}, \ref{fig:backtest_minvar} and \ref{fig:backtest_ERC} show the total wealth of the four different portfolios, mean-variance, reference and the penalized portfolio with the big and the small penalization as a function of time. On these graphs we observe that, at the beginning of the investment horizon, the mean-variance allocation has the largest wealth increase, hence the largest leverage. As the wealth of this portfolio attains the target wealth, expressed as $\frac{1}{2\mu}e^{b^\top \Sigma^{-1}b\ T} + x_0$ in the mean-variance control equation \eqref{eq:optimal_control_classical}, its leverage decreases and its wealth curve flattens. The same phenomenon occurs for the penalized allocation with large penalization parameter $\gamma = \mu$. In this case, the high value of the penalization parameter keeps the penalized portfolio controls close to the ones of the mean-variance portfolio.
On the contrary, the reference portfolios have constant weights and no target wealth. We can see that in each case the reference portfolio's wealth keeps increasing over the entire horizon. The wealth of the penalized portfolio with penalization parameter $\gamma=\mu/100$ follows the wealth of these reference portfolio due to the small value of the tracking error penalization.

For these three reference portfolios, we observe that the penalized portfolio with penalization parameter $\gamma=\mu$ outperforms both the mean-variance and the reference portfolios in terms of Sharpe ratio whereas the penalized portfolio with penalization parameter $\gamma=\mu/100$ outperforms the mean-variance but underperforms the reference portfolio. This can be attributed to the larger weight of the mean-variance criterion with respect to the tracking error in the optimized cost \eqref{eq:cost}  with penalization parameter $\gamma=\mu$.

Finally, Figure \ref{fig:backtest_zero} corresponds to the case of a reference portfolio with weights all equal to zero. This corresponds to a shrinking of the optimal control of the penalized portfolio. In that case, for a better visualization, we plot the total wealth of the mean-variance and penalized portfolios for penalization parameters $\gamma=\mu$ and $\gamma=\mu/100$ normalized by the standard deviation of their daily returns. On this graph, we can see that the normalized wealth of the two penalized portfolio is higher than the one of the mean-variance allocation. Similarly to the three precedent reference portfolios, the two penalized portfolios outperform the mean-variance allocation in terms of Sharpe ratio. As previously, we observe that the Sharpe ratio of the penalized portfolio with penalization parameter $\gamma=\mu$ is greater than the one with $\gamma=\mu/100$, due to the larger weight of the mean-variance criterion in the functional cost. 

\clearpage

\begin{itemize}
    \item Equal-weights reference portfolio
\begin{figure}[h!]
\centering
 \includegraphics[width=10cm,height=5.5cm]{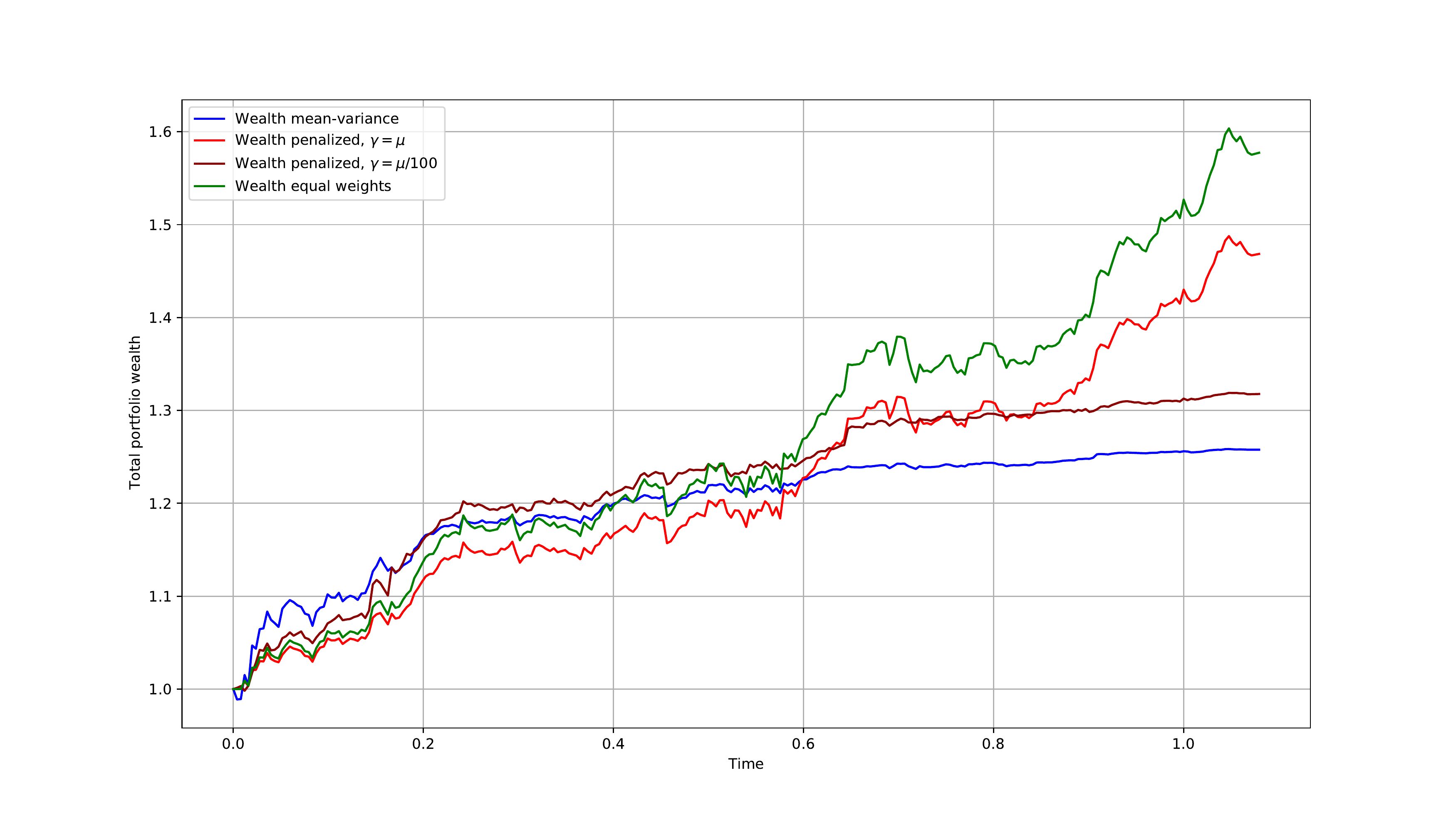}
\caption{Sharpe ratios:\\
Mean-variance : 0.183\\
Equal weights : 0.258\\
Penalized $\gamma=\mu$ : 0.260\\
Penalized $\gamma=\mu/100$ : 0.226
}
\label{fig:backtest_equalw}
\end{figure}


    \item Minimum variance reference portfolio   
\begin{figure}[h!]
\centering
  \includegraphics[width=10cm,height=5.5cm]{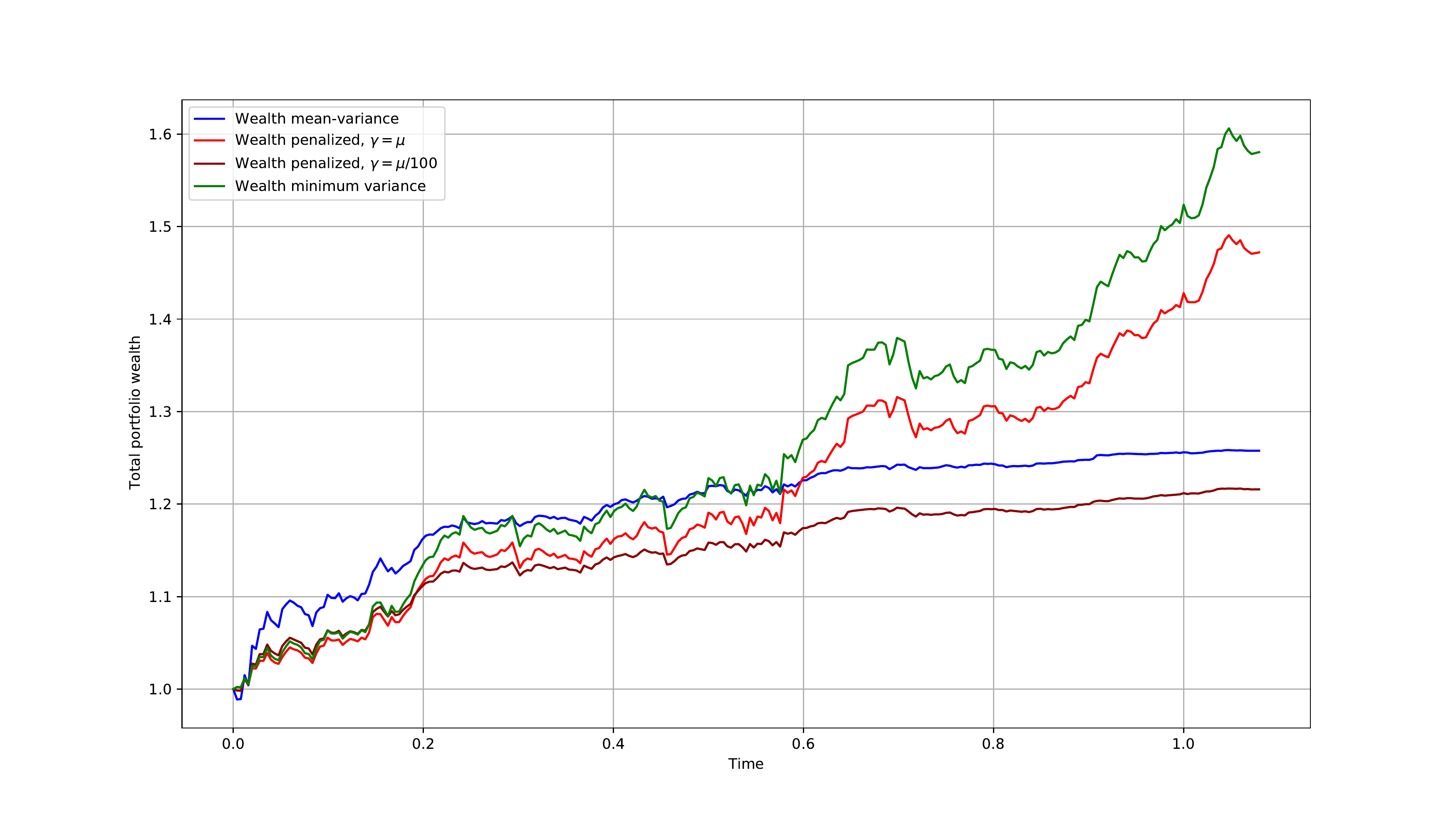}
\caption{Sharpe ratios:\\
Mean-variance : 0.183\\
Minimum variance : 0.255\\
Penalized $\gamma=\mu$ : 0.256\\
Penalized $\gamma=\mu/100$ : 0.220
}
\label{fig:backtest_minvar}
\end{figure}

\item ERC portfolio
\begin{figure}[h!]
\centering
\includegraphics[width=10cm,height=5.5cm]{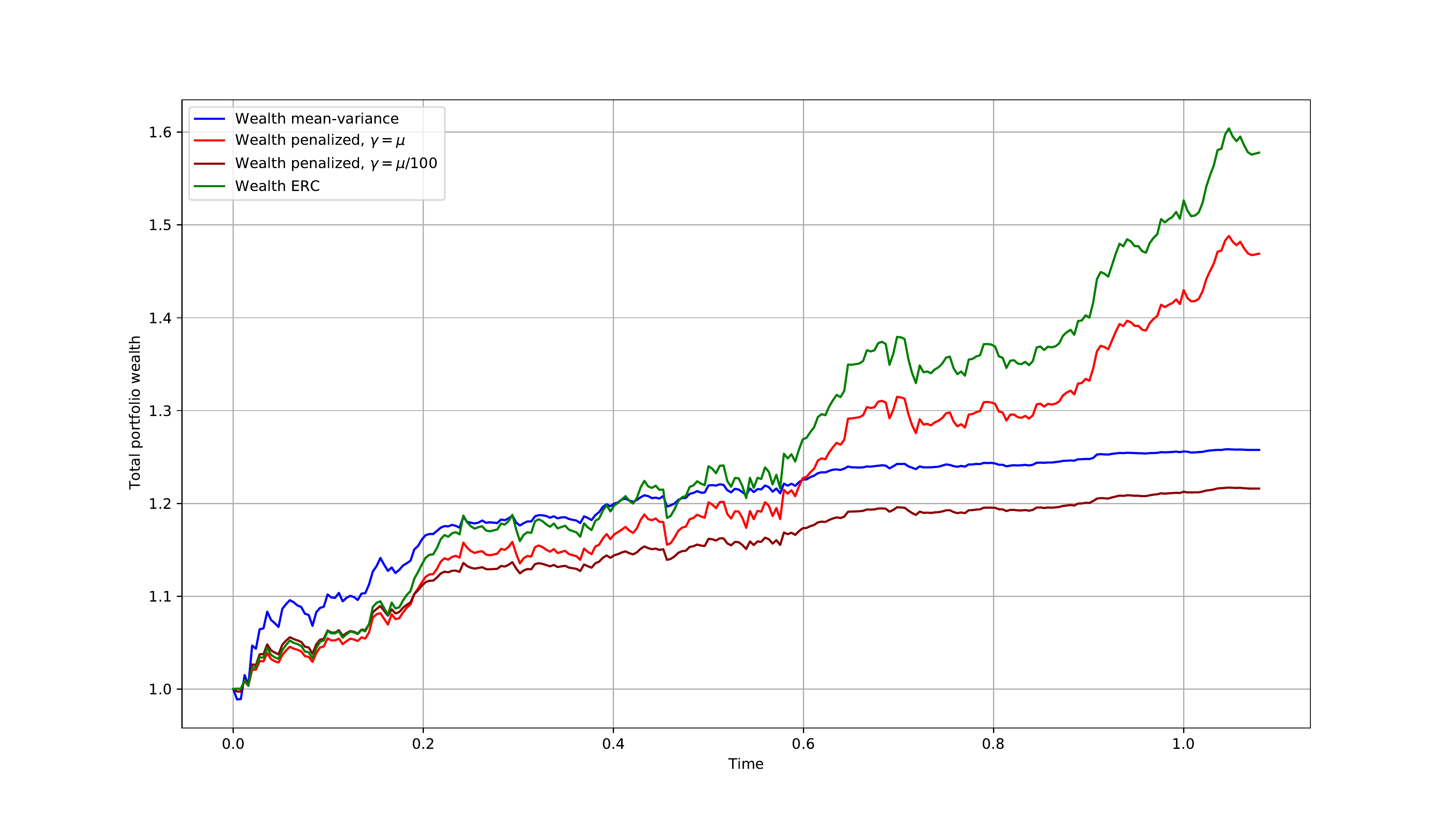}
\caption{Sharpe ratios:\\
Mean-variance : 0.183\\
ERC : 0.258\\
Penalized $\gamma=\mu$ : 0.260\\
Penalized $\gamma=\mu/100$ : 0.225
}
\label{fig:backtest_ERC}
\end{figure}

 \item Zero portfolio (shrinking)
\begin{figure}[h!]
\centering
\includegraphics[width=10cm,height=5.5cm]{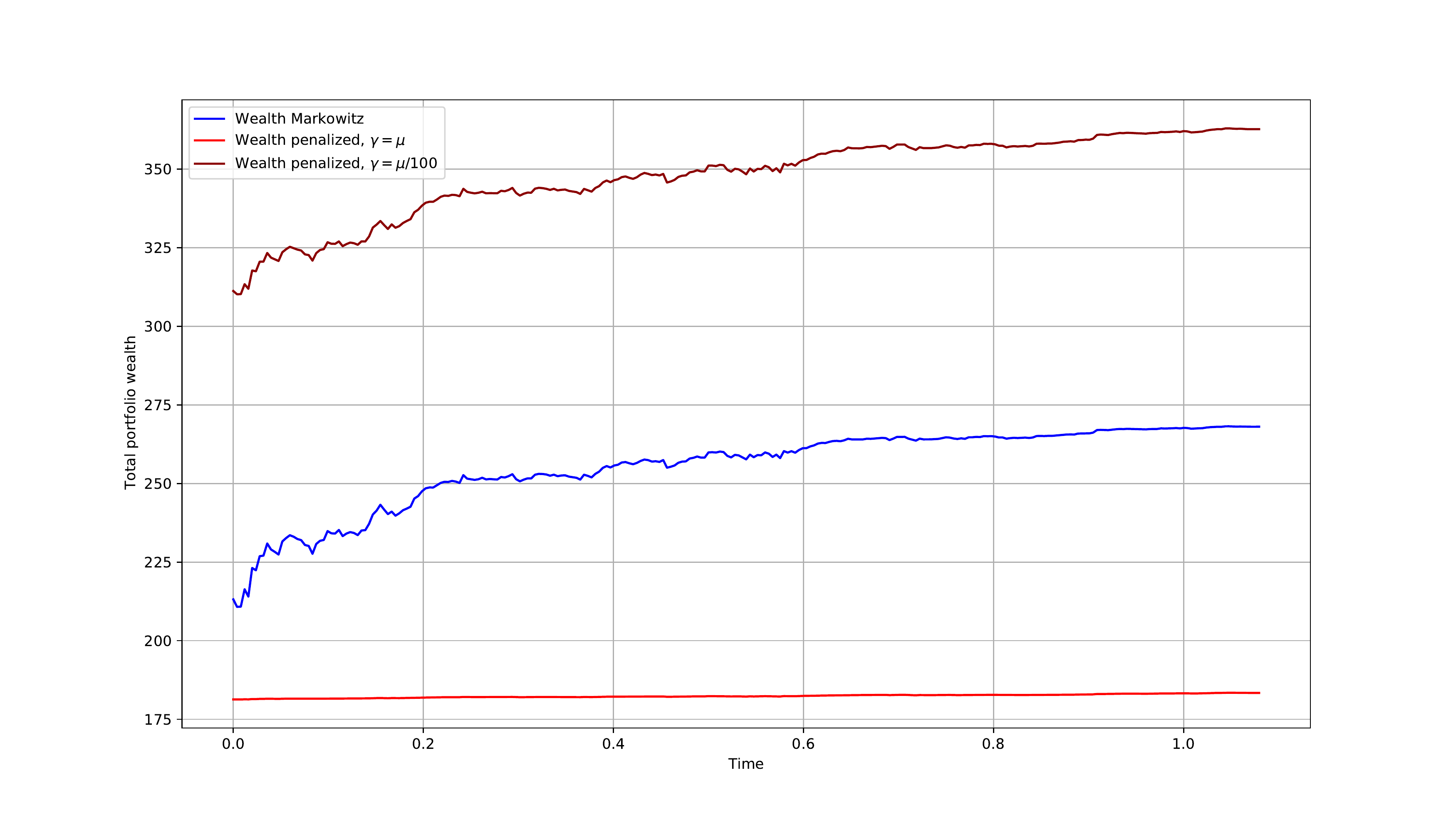}
\caption{Total wealth of the mean-variance and penalized portfolios for $\gamma=\mu$ and $\gamma=\mu/100$, normalized by the standard deviation of daily returns, as a function of time.\\
Sharpe ratios:\\
mean-variance : 0.183\\
Penalized $\gamma=\mu$ : 0.252\\
Penalized $\gamma=\mu/100$ : 0.221
}
\label{fig:backtest_zero}
\end{figure}
\end{itemize}

\section{Conclusion}
In this paper, we propose an  allocation method based on a mean-variance criterion plus a tracking error between the optimized portfolio and a reference portfolio of same wealth and fixed weights. We solve this problem as a linear-quadratic McKean-Vlasov stochastic control problem using a weak martingale approach. We then show using simulations that for a certain degree of market parameter misspecification and the right choice of reference portfolio, the mean-variance portfolio with tracking error penalization outperforms the standard mean-variance and the mean-variance allocations in terms of Sharpe ratio. Another backtest based on historical market data also shows that the mean-variance portfolio with tracking error outperforms the traditional mean-variance and the reference portfolios in terms of Sharpe ratio for the four reference portfolios considered.


\appendix

\section{Appendix}

\subsection{Proof of Theorem \ref{theorem:mv_tracking}}
\label{appendix:mv_tracking}

The proof of Theorem \ref{theorem:mv_tracking} is based on the  weak optimality principle lemma stated in \cite{basei2019weak}, and formulated in the case of the mean-variance problem \eqref{control_problem} as:

\begin{lemma}[\textbf{Weak optimality principle}]
\label{lem:weak_optimality}
Let $\left\{ V_t^\alpha , t\in [0,T], \alpha \in \mathcal{A} \right\}$ be a family of real-valued processes in the form
\begin{equation}
    V_t^\alpha = v_t(X_t^\alpha , \E[X_t^\alpha]) + \int_{0}^{t}\left(\alpha_{s}-w_rX_{s}^\alpha\right)^{\top}\Gamma\left(\alpha_{s}-w_rX_{s}^\alpha\right)ds,
\end{equation}
for some measurable functions $v_t$ on $\R \times \R$, $t\in [0,T]$, such that:
\begin{enumerate}[label=(\roman*)]
    \item $v_T(x,\bar x)$ $=$ $\mu (x - \bar x)^{2} -x$, for all $x,\bar x\in\R$, 
    \item the function $t\in [0,T]\rightarrow \E\left[ V_t^\alpha \right]$ is nondecreasing for all $\alpha\in \mathcal{A}$
    \item the map $t\in [0,T]\rightarrow \E\left[ V_t^{\alpha^{*}} \right]$ is constant for some $\alpha^* \in \mathcal{A}$.
\end{enumerate}
Then, $\alpha^*$ is an optimal portfolio strategy for the mean-variance problem with tracking error \eqref{control_problem}, and 
\begin{equation}
    V_0 = J(\alpha^*).
\end{equation}
\end{lemma}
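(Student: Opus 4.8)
The plan is to derive the conclusion as a direct bookkeeping consequence of the three hypotheses by a soft ``expectation-monotonicity'' argument, with no hard estimates: all that is needed is to evaluate $\E[V_t^\alpha]$ at the two endpoints $t=T$ and $t=0$ and compare.

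First I would compute $\E[V_T^\alpha]$. Substituting hypothesis~(i) into the defining expression of $V_T^\alpha$ yields $V_T^\alpha = \mu\big(X_T^\alpha - \E[X_T^\alpha]\big)^2 - X_T^\alpha + \int_0^T(\alpha_s - w_r X_s^\alpha)^\top\Gamma(\alpha_s - w_r X_s^\alpha)\,ds$, and taking expectations, using $\E\big[(X_T^\alpha - \E[X_T^\alpha])^2\big] = \V(X_T^\alpha)$, the right-hand side is exactly $\mu\V(X_T^\alpha) - \E[X_T^\alpha] + \E\big[\int_0^T(\alpha_s - w_r X_s^\alpha)^\top\Gamma(\alpha_s - w_r X_s^\alpha)\,ds\big] = J(\alpha)$ by \eqref{eq:cost}. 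So $\E[V_T^\alpha] = J(\alpha)$ for every $\alpha\in\mathcal{A}$. Second, at $t=0$ the running integral in $V_0^\alpha$ is empty and $X_0^\alpha = x_0$ is deterministic, so $V_0^\alpha = v_0(x_0,x_0)$ almost surely; writing $c_0 := v_0(x_0,x_0)$, this is a constant independent of $\alpha$ and $\E[V_0^\alpha] = c_0$. Then I would combine the two: for arbitrary $\alpha\in\mathcal{A}$, hypothesis~(ii) applied between $0$ and $T$ gives $c_0 = \E[V_0^\alpha]\le\E[V_T^\alpha] = J(\alpha)$; for the distinguished $\alpha^*$ of hypothesis~(iii) the map $t\mapsto\E[V_t^{\alpha^*}]$ is constant, so $c_0 = \E[V_0^{\alpha^*}] = \E[V_T^{\alpha^*}] = J(\alpha^*)$. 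Hence $J(\alpha^*) = c_0\le J(\alpha)$ for all $\alpha\in\mathcal{A}$, which is precisely the assertion that $\alpha^*$ minimizes $J$ over $\mathcal{A}$ and that $V_0 = \inf_{\alpha\in\mathcal{A}}J(\alpha) = J(\alpha^*)$.

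I do not anticipate a genuine obstacle: this lemma is the ``easy half'' of a verification / martingale-optimality-principle argument, and the only steps deserving a line of care are (a) that $\E[V_t^\alpha]$ is well-defined and finite for $\alpha\in\mathcal{A}$ --- implicit in the hypothesis that $\{V_t^\alpha\}$ is a family of real-valued processes, and checked directly in each application, where $v_t$ is an explicit quadratic with coefficients bounded on $[0,T]$ and $X^\alpha$ is square-integrable --- and (b) faithfully matching the terminal identity with the cost functional \eqref{eq:cost}, in particular recognizing the variance term $\mu\V(X_T^\alpha)$. The substantive work is not in the lemma itself but in constructing, for the concrete problem, a family $\{V_t^\alpha\}$ of the prescribed form satisfying (i)--(iii); that construction --- via Itô's formula along the McKean--Vlasov dynamics \eqref{eq:state_dynamic} and completion of squares in the control --- is what produces the Riccati system \eqref{eq:ode_system} and the candidate optimal control \eqref{eq:optimal_control}.
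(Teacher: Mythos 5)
Your proof is correct and is exactly the standard argument behind this lemma: evaluating $\E[V_t^\alpha]$ at $t=T$ recovers $J(\alpha)$ via condition (i), evaluating at $t=0$ gives the $\alpha$-independent constant $v_0(x_0,x_0)$, and conditions (ii)--(iii) then sandwich $J(\alpha^*)=v_0(x_0,x_0)\le J(\alpha)$. The paper itself does not reprove the lemma but imports it from \cite{basei2019weak}; your argument is the one given there, so there is nothing to flag beyond the care you already note about integrability of $V_t^\alpha$.
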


We aim to construct a family of processes $\left\{ V_t^\alpha , t\in [0,T], \alpha \in \mathcal{A} \right\}$ as in Lemma \eqref{lem:weak_optimality}, and given the linear-quadratic structure of our optimization problem, we look for a measurable function $v_t$ in the form: 
\begin{equation}
\label{eq:candidate_field}
    v_{t}(x,\overline{x})=K_{t}(x - \overline{x})^{2}+ \Lambda_{t}\overline{x}_{}^{2}+2Y_{t}x_{}+R_{t}
\end{equation}
for some deterministic processes $\left(K_{t},\Lambda_{t},Y_{t},R_{t}\right)$ to be determined. Condition $(i)$ in Lemma \eqref{lem:weak_optimality} fixes the terminal condition 
\begin{equation}
\label{eq:terminal_condition}
    K_T = \mu, \ \Lambda_T = 0, \ Y_T = -1/2,\ R_T = 0. 
\end{equation}
For any $\alpha\in \mathcal{A}$, with associated wealth process $X:=X^\alpha$, let us compute the derivative of the deterministic function $t\rightarrow \E[V_t^\alpha] = \E \left[ v_t(X_t, \E[X_t]) + \int_{0}^{t}\left(\alpha_{s}-w_rX_{s}\right)^{\top}\Gamma\left(\alpha_{s}-w_rX_{s}\right)ds \right]$ with $v_t$ as in \eqref{eq:candidate_field}. From the dynamics of $X = X_t^\alpha$ in \eqref{eq:state_dynamic} and by applying Itô's formula, we obtain

\begin{align}
\label{eq:ansatz_value}
    \frac{d\E [V_t^\alpha]}{dt}= & \textrm{Var}({X}_{t})\left(\dot{K}_{t}+w_r^{\top}\Gamma w_r\right) + \overline{X}_{t}^{2}\left(\dot{\Lambda}_{t}+w_r^{\top}\Gamma w_r\right)+2\overline{X}_{t}\dot{Y}_{t}+\dot{R}_{t}\\
    &+ \E[G_t(\alpha)]
\end{align}
where 
\begin{equation}
    G_t(\alpha):=\alpha_{t}^{\top} S_t \alpha_{t}+2\left\{ \left(K_{t}(X_t-\overline{X}_{t})+Y_{t}+\Lambda_{t}\overline{X}_{t}\right)b^{\top}-X_{t} w_r^{\top}\Gamma\right\} \alpha_{t}.
\end{equation}
By completing the square in $\alpha$, and setting $S_t := K_t \Sigma + \Gamma$ and $\tilde{\rho}_t := b^\top S_t^{-1} b$, we rewrite $G_t(\alpha)$ as 
\begin{align}
    G_t(\alpha) =&\mathbb{E}\left[\left(\alpha_{t}-\alpha_{t}^{\Gamma}\right)^{\top}S_{t}\left(\alpha_{t}-\alpha_{t}^{\Gamma}\right)\right]\\
 & -\textrm{Var}(X_{t})\left\{ K_{t}^{2}\tilde{\rho}_t+ w_r^{\top}\Gamma S_{t}^{-1}\Gamma w_r-2K_{t}b^{\top}S_{t}^{-1}\Gamma w_r\right\} \\
 & -\overline{X}_{t}^{2}\left\{ \Lambda_{t}^{2} \tilde{\rho}_t +w_r^{\top}\Gamma S_{t}^{-1}\Gamma w_r-2\Lambda_{t}b^{\top}S_{t}^{-1}\Gamma w_r\right\} \\
 & -2\overline{X}_{t}\left\{ \Lambda_{t}Y_{t} \tilde{\rho}_t -Y_{t}b^{\top}S_{t}^{-1}\Gamma w_r\right\} -Y_{t}^{2}\rho_{t}
\end{align}
with $\alpha_t^\Gamma := S_t^{-1} \Gamma w_r X_t -S_{t}^{-1}b\left[K_{t}X_t +Y_{t}-(K_t - \Lambda_{t})\overline{X}_{t}\right]$. The expression in \eqref{eq:ansatz_value} is then rewritten as 

\begin{align}
    \frac{d\E [V_t^\alpha]}{dt}=& \mathbb{E}\left[\left(\alpha_{t}-\alpha_{t}^{\Gamma}\right)^{\top}S_{t}\left(\alpha_{t}-\alpha_{t}^{\Gamma}\right)\right]\\
 & +\textrm{Var}(X_{t})\left\{ \dot{K}_{t}-K_{t}^{2} \tilde{\rho}_t +w_r^{\top}\Gamma w_r+2K_{t}b^{\top}S_{t}^{-1}\Gamma w_r-w_r^{\top}\Gamma S_{t}^{-1}\Gamma w_r\right\} \\
 & +\overline{X}_{t}^{2}\left\{ \dot{\Lambda}_{t}-\Lambda_{t}^{2} \tilde{\rho}_t + w_r^{\top}\Gamma w_r+2\Lambda_{t}b^{\top}S_{t}^{-1}\Gamma w_r-w_r^{\top}\Gamma S_{t}^{-1}\Gamma w_r\right\} \\
 & +2\overline{X}_{t}\left(\dot{Y}_{t}+ Y_{t}b^{\top}S_{t}^{-1}\Gamma w_r-\Lambda_{t}Y_{t} \tilde{\rho}_t \right)\\
 & +\dot{R}_{t}-Y_{t}^{2}\tilde{\rho}_t.
\end{align}

Therefore, whenever

\begin{equation}
    \begin{cases}
        \dot{K}_{t}-K_{t}^{2} \tilde{\rho}_t + w_r^{\top}\Gamma w_r+2K_{t}b^{\top}S_{t}^{-1}\Gamma w_r-w_r^{\top}\Gamma S_{t}^{-1}\Gamma w_r &= 0 \\
        \dot{\Lambda}_{t}-\Lambda_{t}^{2} \tilde{\rho}_t + w_r^{\top}\Gamma w_r+2\Lambda_{t}b^{\top}S_{t}^{-1}\Gamma w_r-w_r^{\top}\Gamma S_{t}^{-1}\Gamma w_r &=0 \\
        \dot{Y}_{t}+ Y_{t}b^{\top}S_{t}^{-1}\Gamma w_r-\Lambda_{t}Y_{t}\tilde{\rho}_t &=0 \\
        \dot{R}_{t}-Y_{t}^{2}\tilde{\rho}_t &= 0
    \end{cases}
\end{equation}
holds for all $t\in[0,T]$, we have
\begin{equation}
   \frac{d\E [V_t^\alpha]}{dt}= \mathbb{E}\left[\left(\alpha_{t}-\alpha_{t}^{\Gamma}\right)^{\top}S_{t}\left(\alpha_{t}-\alpha_{t}^{\Gamma}\right)\right] 
\end{equation}
which is nonnegative for all $\alpha\in\Acal$, i.e., the process $V_t^\alpha$ satisfies the condition $(ii)$ of Lemma \eqref{lem:weak_optimality}. Moreover, we see that $V_t^\alpha = 0, \ 0\leq t \leq T$ if and only if $\alpha_t = \alpha_t^\Gamma$, $0\leq t \leq T$.\\
$X^\Gamma := X^{\alpha^\Gamma}$ is solution to a linear McKean-Vlasov dynamics and, since $K\in C([0,T], \R_+^*)$, $\Lambda\in C([0,T], \R_+)$ and $Y\in C([0,T], \R)$, $X^{\Gamma}$ satisfies the square integrability condition $\E\left[ \underset{0\leq t\leq T}{\sup} |X_t^{\Gamma}|^2 \right]<\infty$, which implies that $\alpha^\Gamma$ is $\mathbb{F}$-progressively measurable and $\int_0^T \E[|\alpha_t
^\Gamma|^2]dt<\infty$. Therefore, $\alpha^\Gamma \in \Acal$, and we conclude by the verification lemma \ref{lem:weak_optimality} that it is the unique optimal control. 
\qed

\subsection{Computation linear expansion of $\alpha^\gamma$ for $\Gamma =\gamma\mathbb{I}_d \rightarrow \mathbf{0}$}
\label{appendix:control_DL}

\begin{align}
    \alpha_{t}^{\gamma}=& \Sigma^{-1}b \left(\frac{1}{2\mu} e^{\rho T} + X_0 - X_t \right)\\
    &+\gamma \left(\Sigma^{-1}w_r+\Sigma^{-2}b \right) \frac{X_t}{K_t^0} - \gamma \|w_r\|^2 (T-t)\frac{\Sigma^{-1}}{K_t^0}b \left( X_0 + \frac{e^{\rho T}}{\mu}\left( 1-e^{-\rho t} \right) \right)\\
    &-\gamma \left( \Sigma^{-1}b X_0 C_{0,t}^1 +\Sigma^{-2}b\frac{X_0}{K_t^0} \right)\\
    &-\gamma \left( \Sigma^{-1}b \frac{H_t^1}{2} + \Sigma^{-2}b \frac{e^{\rho T}}{2K_t^0 \mu} \left( 1-e^{-\rho t} \right) \right)\\
    &-\frac{\gamma}{2}\left\{ \frac{1}{\left( K_t^0 \right)^2} \Sigma^{-1} \left( K_t^1 \mathbb{1} +\Sigma^{-1} \right)b + \Sigma^{-1}b \frac{C_{t,T}}{K_t^0} \right\} + O(\gamma^2)\\
    =& \Sigma^{-1}b \left(\frac{1}{2\mu} e^{\rho T} + X_0 - X_t \right)\\
    &+\gamma \Sigma^{-1}w_r \frac{X_t}{K_t^0}\\
    &- \gamma \Sigma^{-1}b \left\{ \frac{\|w_r\|^2}{K_t^0} (T-t)\left( X_0 + \frac{e^{\rho T}}{\mu}\left( 1-e^{-\rho t} \right) \right) + X_0 C_{0,t}^1 + \frac{H_t^1}{2} + \frac{K_t^1}{2\left( K_t^0 \right)^2} + \frac{C_{t,T}}{2K_t^0} \right\} \\
    &-\gamma \Sigma^{-2}b \left\{ \frac{e^{\rho T}}{2K_t^0 \mu} \left( 1-e^{-\rho t} \right) + \frac{1}{2\left( K_t^0 \right)^2} \right\} + O(\gamma^2). 
\end{align}

\subsection{Computation linear expansion of $Var(X_T)$ for $\Gamma =\gamma\mathbb{I}_d \rightarrow 0$}
\label{appendix:var_computation}

We recall that the linear expansion of the optimal control can be written as 
\begin{equation}
    \alpha_{t}^{\gamma}= \Sigma^{-1}b~ \alpha_t^0 +\gamma \left( \Sigma^{-1}w_r~ \alpha_t^{1,3} - \Sigma^{-2}b~ \alpha_t^{1,2} - \Sigma^{-1}b~ \alpha_t^{1,1} \right) + O(\gamma^2)
\end{equation}
where the coefficients $\alpha_t^{1,1}$, $\alpha_t^{1,2}$ and $\alpha_t^{1,3}$ are given by \eqref{eq:control_expansion_weights}. The average total wealth of the portfolio constructed by the optimal control at time $t$ is given by the ODE 
\begin{equation}
    \label{eq:expansion_average_wealth_ODE}
    d\overline{X_{t}}= \rho\zeta-\gamma\left(\rho\alpha_{t}^{1,1}+b^{\top}\Sigma^{-2}b\alpha_{t}^{1,2}\right)+\left(\gamma\frac{b^{\top}\Sigma^{-1}w_{r}}{K_{t}^{0}}-\rho\right)\overline{X}_{t} + O(\gamma^2), \quad \overline{X_{t}} = X_0,
\end{equation}
where we set $\zeta := X_0 + \frac{1}{2\mu}e^{\rho T}$. We get the solution
\begin{align}
\overline{X_{T}}= & X_{0}e^{-\rho T}+\zeta\left(1-e^{-\rho T}\right)\\
 & +\gamma\left\{ \frac{b^{\top}\Sigma^{-1}w_{r}}{\mu}\left(T\zeta- \frac{1}{2\mu}e^{\rho T} \frac{1-e^{-\rho T}}{\rho}\right)-\int_{0}^{T}\left(\rho\alpha_{s}^{1,1}+b^{\top}\Sigma^{-2}b\alpha_{s}^{1,2}\right)e^{-\rho\left(T-s\right)}ds\right\} +O\left(\gamma^{2}\right)\\
= & \overline{X_{T}}^{0}+\gamma\overline{X_{T}}^{1}+O(\gamma^{2})
\end{align}
with
\begin{equation}
\begin{cases}
\overline{X_{T}}^{0}:=X_{0}e^{-\rho T}+\zeta\left(1-e^{-\rho T}\right)\\
\overline{X_{T}}^{1}:=\frac{b^{\top}\Sigma^{-1}w_{r}}{\mu}\left(T\zeta-\frac{1}{2\mu}e^{\rho T} \frac{1-e^{-\rho T}}{\rho}\right)-\int_{0}^{T}\left(\rho\alpha_{s}^{1,1}+b^{\top}\Sigma^{-2}b\alpha_{s}^{1,2}\right)e^{-\rho\left(T-s\right)}ds
\end{cases}
\end{equation}
and
\begin{equation}
\overline{X_{T}}^{2}= \left(\overline{X_{T}}^{0}\right)^{2}+2\gamma\overline{X_{T}}^{0}\overline{X_{T}}^{1}+O(\gamma^{2})
\end{equation}

The average of the square of the portfolio wealth at time $t$ is given by the ODE
\begin{align}
d\overline{X_{t}^{2}}= & \left(\zeta-\gamma\alpha_{t}^{1,1}\right)^{2}\rho-2\gamma b^{\top}\Sigma^{-2}b\alpha_{t}^{1,2}\left(\zeta-\gamma\alpha_{t}^{1,1}\right)\\
 & +2\gamma\frac{w_{r}^{\top}\Sigma^{-1}b}{K_{t}^{0}}\left(\zeta-\gamma\alpha_{t}^{1,1}\right)\overline{X}_{t}\\
 & -\rho\overline{X_{t}^{2}} + O(\gamma^2)
\end{align}
which gives the solution
\begin{align}
\overline{X_{T}^{2}}= &X_{0}^{2}e^{-\rho T}+\zeta^{2}\left(1-e^{-\rho T}\right)\\
 & -2\gamma\zeta\int_{0}^{T}\left\{ \rho\alpha_{s}^{1,1}+b^{\top}\Sigma^{-2}b\alpha_{s}^{1,2}\right\} e^{-\rho(T-s)}ds\\
 & +2\gamma\frac{w_{r}^{\top}\Sigma^{-1}b}{\mu}\zeta\int_{0}^{T}\overline{X_{s}}^{0}ds+O(\gamma^{2}).
\end{align}
We can then compute the variance of the terminal total wealth of the portfolio given by the control \eqref{eq:optimal_control_expansion}
\begin{align}
Var(X_{T})= & \overline{X_{T}^{2}}-\overline{X_{t}}^{2}\\
= & \frac{e^{-\rho T}}{1-e^{-\rho T}}\left(\overline{X_{T}}^{0}-X_{0}\right)^{2}\\
 & +\gamma\frac{b^{\top}\Sigma^{-1}w_{r}}{\mu^{2}}\left(\zeta T-\frac{1}{2\mu}e^{\rho T}\frac{1-e^{-\rho T}}{\rho}\right)\\
 & -\gamma\int_{0}^{T}\left(\frac{\rho}{\mu}\alpha_{s}^{1,1}+\frac{b^{\top}\Sigma^{-2}b}{\mu}\alpha_{s}^{1,2}\right)e^{-\rho\left(T-s\right)}ds+O(\gamma^{2}). 
\end{align}
\qed

\subsection{Proof that $K_t$ and $\Lambda_t$ are bounded in $\gamma$}
\label{appendix:bounded_quotients}

To prove this, we use a theorem from \cite{gronwall1919note} (also in \cite{hairer1993solving}, Theorem 14.1, p93).
    We rewrite the differential equation of $K$ as 
    \begin{equation}
        \label{eq:K_ODE}
        \frac{dK_t}{dt}= f(t, K_t, \gamma) ,\quad K_T =\mu
    \end{equation}
    with $f(t, K_t, \gamma) := \left( K_t b-\gamma w_r \right)^\top \left( K_t \Sigma + \gamma \mathbb{1} \right)^{-1} \left( K_t b-\gamma w_r \right) - \gamma \|w_r\|^2$, where $\|\cdot \|$ denotes the euclidean norm in $\R^d$.\\
    For $t\in [0,T]$, the partial derivatives $\partial f/\partial K$ and $\partial f/\partial \gamma$ exist and are continuous in the neighbourhood of the solution $K_t$. Then the partial derivative 
    \begin{equation}
        \frac{\partial K_t}{\partial \gamma} = \psi_t
    \end{equation}
    exists, is continuous, and satisfies the differential equation
    \begin{equation}
        \psi_t^{'}=\frac{\partial f}{\partial K}(t, K_t, \gamma) \psi_t + \frac{\partial f}{\partial \gamma}(t, K_t, \gamma).
    \end{equation}
    Recalling that the derivative of the inverse of a nonsingular matrix $M$ whose elements are functions of a scalar parameter $p$ w.r.t this parameter is equal to $\frac{\partial M^{-1}}{\partial_p} = -M^{-1} \frac{\partial M}{\partial_p} M^{-1}$, we can compute the partial derivatives $\partial f/\partial K$ and $\partial f/\partial \gamma$, and we obtain the following differential equation for $\psi$
    \begin{equation}
        \begin{cases}
        \left(\psi_t\right)^{'}= \left[ -\|\sigma^\top S_t^{-1} \left( K_t b - \gamma w_r \right) \|^2 + 2 b^\top S_t^{-1} \left( K_t b - \gamma w_r \right) \right]\psi_t - \| w_r + S_t^{-1}\left( K_t b - \gamma w_r \right) \|^2, \quad t\in [0,T]\\
        \psi_T = 0.
        \end{cases}
    \end{equation}
    This ODE has an explicit solution given by
    \begin{equation}
        \psi_t = \int_t^T A_s e^{-\int_t^s B_u du} ds
    \end{equation}
    with $A_t \geq 0,\ \forall t \in [0,T]$ equal to 
    \begin{equation}
        A_t := \frac{K_t^2}{\gamma^2} \| \left( \mathbb{I}_d+\frac{K_t}{\gamma}\Sigma \right)^{-1} \left( b+\Sigma w_r \right) \|^2 \underset{\gamma \rightarrow \infty}{\longrightarrow} 0
    \end{equation}
    and 
    \begin{align}
        B_t :=& 2 \frac{K_t}{\gamma} \left( b + \Sigma w_r \right)^\top \left( \mathbb{I}_d+\frac{K_t}{\gamma}\Sigma \right)^{-1} \left( b + \Sigma w_r \right)\\
        & - \frac{K_t^2}{\gamma^2} \left( b + \Sigma w_r \right)^\top \left( \mathbb{I}_d+\frac{K_t}{\gamma}\Sigma \right)^{-1}\Sigma \left( \mathbb{I}_d+\frac{K_t}{\gamma}\Sigma \right)^{-1} \left( b + \Sigma w_r \right)\\
        & -2b^\top w_r - \|\sigma^\top w_r \|^2.
    \end{align}
    We have $B_t \underset{\gamma\rightarrow\infty}{\rightarrow}-2b^\top w_r - \|\sigma^\top w_r \|^2 $, thus $\psi_t \underset{\gamma\rightarrow\infty}{\longrightarrow}0,\ \forall t \in [0,T]$ and $K_t$ is bounded in $\gamma$ for every $t\in [0,T]$.\\
    In the same spirit, we rewrite the differential equation of $\Lambda_t$ as 
    \begin{equation}
        \frac{d\Lambda_t}{dt}=g(t, \Lambda_t, \gamma), \quad \Lambda_t = 0
    \end{equation}
    with $g(t, \Lambda_t, \gamma):=\left(\Lambda_t b -\gamma w_r \right)^\top S_t^{-1}\left(\Lambda_t b -\gamma w_r \right)-\gamma\|w_r\|^2$. The partial derivative
    \begin{equation}
        \frac{\partial\Lambda_t}{\partial\gamma}=\phi_t
    \end{equation}
    exists, is continuous and satisfies the differential equation 
    \begin{equation}
        \begin{cases}
            \phi_t^{'} = 2b^\top S_t^{-1} \left( \Lambda_t b -\gamma w_r \right) \phi_t - \left[\| w_r + S_t^{-1} \left( \Lambda_t b -\gamma w_r \right)\|^2 + \psi_t \| \sigma^\top S_t^{-1} \left( \Lambda_t b - \gamma w_r \right) \|^2 \right], \quad t\in [0,T]\\
            \phi_0 = 0.
        \end{cases}
    \end{equation}
    which gives the explicit solution 
    \begin{equation}
        \phi_t = \int_t^T C_s e^{-\int_t^s D_u du}ds
    \end{equation}
    with $C_t\geq 0,\ \forall t \in [0,T]$ equal to 
    \begin{align}
        C_t := \| \frac{1}{\gamma} \left( \mathbb{I}_d +\frac{K_t}{\gamma}\Sigma \right)^{-1}\left( \Lambda_t b +K_t \Sigma w_r \right) \|^2 + \psi_t \| \frac{1}{\gamma}\sigma^\top \left( \mathbb{I}_d +\frac{K_t}{\gamma}\Sigma \right)^{-1}\left( \Lambda_t b +K_t \Sigma w_r \right) -\sigma^\top w_r \|^2
    \end{align}
    and
    \begin{equation}
        D_t := 2\left[\frac{1}{\gamma} b^\top  \left( \mathbb{I}_d +\frac{K_t}{\gamma}\Sigma \right)^{-1}\left( \Lambda_t b +K_t \Sigma w_r \right) - b^\top w_r  \right].
    \end{equation}
    We showed that $\frac{K_t}{\gamma}, \psi_t \underset{\gamma\rightarrow\infty}{\longrightarrow}0$ for every $t\in[0,T]$. Thus $C_t\underset{\gamma\rightarrow\infty}{\longrightarrow}0$, $D_t \underset{\gamma\rightarrow\infty}{\longrightarrow} -2b^\top w_r$ and $\phi_t \underset{\gamma\rightarrow\infty}{\longrightarrow} 0$, $\forall t \in [0,T]$. $\Lambda_t$ is then bounded in $\gamma$ for every $t\in [0,T]$.

 \newpage
\bibliographystyle{plainnat}
\bibliography{bibl}

\end{document}